\newtheorem{defn}{Definition}
\newtheorem{thm}{Theorem}
\newtheorem{cor}{Corollary}
\begin{document}

\title{Plug-and-play dual-tree algorithm runtime analysis}

\author{\name Ryan R. Curtin \email ryan@ratml.org\\
  \addr School of Computational Science and Engineering\\
  Georgia Institute of Technology\\
  Atlanta, GA 30332-0250, USA
  \AND
  \name Dongryeol Lee \email drselee@gmail.com\\
  \addr GE Global Research Center\\
  Schenectady, NY 12309
  \AND
  \name William B. March \email march@ices.utexas.edu\\
  \addr Institute for Computational Engineering and Sciences\\
  University of Texas, Austin\\
  Austin, TX 78712-1229
  \AND
  \name Parikshit Ram \email p.ram@gatech.edu\\
  \addr Skytree, Inc.\\
  Atlanta, GA 30332
}

\editor{Unknown}

\maketitle

\begin{abstract}
Numerous machine learning algorithms contain pairwise statistical problems at
their core---that is, tasks that require computations over all pairs of input
points if implemented naively.  Often, tree structures are used to solve these
problems efficiently. Dual-tree algorithms can efficiently solve or approximate
many of these problems.  Using cover trees, rigorous worst-case runtime
guarantees have been proven for some of these algorithms. In this paper, we
present a {\em problem-independent} runtime guarantee for {\em any} dual-tree
algorithm using the cover tree, separating out the problem-dependent and the
problem-independent elements.  This allows us to just plug in bounds for the
problem-dependent elements to get runtime guarantees for dual-tree algorithms
for any pairwise statistical problem without re-deriving the entire proof. We
demonstrate this plug-and-play procedure for nearest-neighbor search and
approximate kernel density estimation to get improved runtime guarantees.  Under
mild assumptions, we also present the first linear runtime guarantee for
dual-tree based range search.
\end{abstract}

\begin{keywords}
  dual-tree algorithms, branch and bound, nearest neighbor search, kernel
density estimation, range search
\end{keywords}

\section{Dual-tree algorithms}

A surprising number of machine learning algorithms have computational
bottlenecks that can be expressed as pairwise statistical problems.
By this, we mean computational tasks that can be evaluated directly by iterating over all
pairs of input points.
Nearest neighbor search is one such problem, since for every query point, we can
evaluate its distance to every reference point and keep the closest one.
This naively requires $O(N)$ time
to answer in single query in a reference set of size $N$; answering $O(N)$
queries subsequently requires prohibitive $O(N^2)$ time. Kernel density
estimation is also a pairwise statistical problem, since we compute a sum over all
reference points for each query point.
This again requires $O(N^2)$
time to answer $O(N)$ queries if done directly.
The reference set is typically indexed with
spatial data structures to accelerate this type of computation
\citep{finkel1974quad, langford2006}; these result in $O(\log N)$ runtime per
query under favorable conditions.

Building upon this intuition, \citet{nbody} generalized the fast multipole
method from computational physics to obtain dual-tree algorithms.  These are
extremely useful when there are large query sets, not just a few query points.
Instead of building a tree on the reference set and searching with each query
point separately, Gray and Moore suggest also building a query tree and
traversing both the query and reference trees simultaneously (a {\it dual-tree
traversal}, from which the class of algorithms takes its name).

Dual-tree algorithms can be easily understood through the recent framework of
\citet{curtin2013tree}: two trees (a query tree and a reference
tree) are traversed by a {\it pruning dual-tree traversal}.  This traversal
visits combinations of nodes from the trees in some sequence (each combination
consisting of a query node and a reference node), calling a problem-specific
\texttt{Score()} function to determine if the node combination can be pruned.
If not, then a problem-specific \texttt{BaseCase()} function is called for each
combination of points held in the query node and reference node.  This has
significant similarity to the more common single-tree branch-and-bound
algorithms, except that the algorithm must recurse into child nodes of {\it
both} the query tree and reference tree.

There exist numerous dual-tree algorithms for problems as diverse as kernel
density estimation \citep{gray2003nonparametric}, mean shift \citep{wang2007fast},
minimum spanning tree calculation \citep{march2010euclidean}, $n$-point
correlation function estimation \citep{march2012fast}, max-kernel search
\citep{curtin2013fast}, particle smoothing \citep{klaas2006fast}, variational
inference \citep{amizadeh2012variational}, range search \citep{nbody}, and
embedding techniques \cite{maaten2014accelerating}, to name a few.

Some of these algorithms are derived using the cover tree \citep{langford2006}, a
data structure with compelling theoretical qualities.  When cover trees are
used, Dual-tree all-nearest-neighbor search and approximate kernel density
estimation have $O(N)$ runtime guarantees for $O(N)$ queries \citep{ram2009};
minimum spanning tree calculation scales as $O(N \log N)$
\citep{march2010euclidean}.  Other problems have similar worst-case guarantees
\citep{curtin2014dual, march2013multi}.

In this work we combine the generalization of \citet{curtin2013tree} with the
theoretical results of \citet{langford2006} and others in order to develop a
worst-case runtime bound for any dual-tree algorithm when the cover tree is
used.

Section \ref{sec:preliminaries} lays out the required background, notation, and
introduces the cover tree and its associated theoretical properties.  Readers
familiar with the cover tree literature and dual-tree algorithms
\citep[especially][]{curtin2013tree} may find that section to be review.
Following that, we introduce an intuitive measure of cover tree imbalance, an
important property for understanding the runtime of dual-tree algorithms, in
Section \ref{sec:imbalance}.  This measure of imbalance is then used to prove
the main result of the paper in Section \ref{sec:bound}, which is a worst-case
runtime bound for generalized dual-tree algorithms.  We apply this result to
three specific problems: nearest neighbor search (Section \ref{sec:nns}),
approximate kernel density estimation (Section \ref{sec:akde}), and range search
/ range count (Section \ref{sec:rs}), showing linear runtime bounds for each of
those algorithms.  Each of these bounds is an improvement on the
state-of-the-art, and in the case of range search, is the first such bound.
%  We
%then investigate the empirical performance of cover trees for these algorithms
%in Section \ref{sec:emp}, with a focus on understanding the dataset-dependent
%constants that our bounds use.

\section{Preliminaries}
\label{sec:preliminaries}

For simplicity, the algorithms considered in this paper will be presented in a
tree-independent context, as in \citet{curtin2013tree}, but the only type of
tree we will consider is the cover tree \citep{langford2006}, and the only type
of traversal we will consider is the cover tree pruning dual-tree traversal,
which we will describe later.

As we will be making heavy use of trees, we must establish notation \citep[taken
from][]{curtin2013tree}.  The notation we will be using is defined in Table
\ref{tab:notation}.

\begin{table}
{\small
\begin{center}
\begin{tabular}{|c|l|}
\hline
{\bf Symbol} & {\bf Description} \\ \hline
$\mathscr{N}$ & A tree node \\ \hline
$\mathscr{C}_i$ & Set of child nodes of $\mathscr{N}_i$ \\ \hline
$\mathscr{P}_i$ & Set of points held in $\mathscr{N}_i$ \\ \hline
$\mathscr{D}_i^n$ & Set of descendant nodes of $\mathscr{N}_i$ \\ \hline
$\mathscr{D}_i^p$ & Set of points contained in $\mathscr{N}_i$ and
$\mathscr{D}_i^n$ \\ \hline
$\mu_i$ & Center of $\mathscr{N}_i$ (for cover trees, $\mu_i = p_i$) \\ \hline
$\lambda_i$ & Furthest descendant distance \\ \hline
\end{tabular}
\end{center}
}
\caption{Notation for trees.  See \cite{curtin2013tree} for details.}
\label{tab:notation}
\end{table}

\subsection{The cover tree}

The cover tree is a leveled hierarchical data structure originally proposed for
the task of nearest neighbor search by \citet{langford2006}.  Each node
$\mathscr{N}_i$ in the cover tree is associated with a single point $p_i$.  An
adequate description is given in their work (we have adapted notation slightly):

\begin{quote}
A {\it cover tree} $\mathscr{T}$ on a dataset $S$ is a leveled tree where each
level is a ``cover'' for the level beneath it.  Each level is indexed by an
integer scale $s_i$ which decreases as the tree is descended.  Every {\it node}
in the tree is associated with a point in $S$.  Each {\it point} in $S$ may be
associated with multiple nodes in the tree; however, we require that any point
appears at most once in every level.  Let $C_{s_i}$ denote the set of points in
$S$ associated with the nodes at level $s_i$.  The cover tree obeys the
following invariants for all $s_i$:

\begin{itemize}
  \item {\em (Nesting)}. $C_{s_i} \subset C_{s_i - 1}$.  This implies that once a
point $p \in S$ appears in $C_{s_i}$ then {\it every} lower level in the tree
has a node associated with $p$.

  \item {\em (Covering tree)}. For every $p_i \in C_{s_i - 1}$, there exists a
$p_j \in C_{s_i}$ such that $d(p_i, p_j) < 2^{s_i}$ and the node in level $s_i$
associated with $p_j$ is a parent of the node in level $s_i - 1$ associated with
$p_i$.

  \item {\em (Separation)}.  For all distinct $p_i, p_j \in C_{s_i}$, $d(p_i,
p_j) > 2^{s_i}$.
\end{itemize}
\end{quote}

As a consequence of this definition, if there exists a node $\mathscr{N}_i$,
containing the point $p_i$ at some scale $s_i$, then there will also exist a
self-child node $\mathscr{N}_{ic}$ containing the point $p_i$ at scale $s_i - 1$
which is a child of $\mathscr{N}_i$.  In addition, every descendant point of the
node $\mathscr{N}_i$ is contained within a ball of radius $2^{s_i + 1}$ centered
at the point $p_i$; therefore, $\lambda_i = 2^{s_i + 1}$ and $\mu_i = p_i$
(Table \ref{tab:notation}).

Note that the cover tree may be interpreted as an infinite-leveled tree, with
$C_{\infty}$ containing only the root point, $C_{-\infty} = S$, and all levels
between defined as above.  \citet{langford2006} find
this representation (which they call the {\it implicit} representation) easier
for description of their algorithms and some of their proofs.  But clearly,
this is not suitable for implementation; hence, there is an {\it explicit}
representation in which all nodes that have only a self-child are coalesced
upwards (that is, the node's self-child is removed, and the children of that
self-child are taken to be the children of the node).  %Figure
%\ref{fig:implicit-explicit} shows an example cover tree and highlights implicit
%and explicit nodes.

In this work, we consider only the explicit representation of a cover tree, and
do not concern ourselves with the details of tree construction\footnote{A batch
construction algorithm is given by \citet{langford2006}, called
\texttt{Construct}.}.

\subsection{Expansion constant}

The explicit representation of a cover tree has a number of useful theoretical
properties based on the expansion constant \citep{karger2002finding}; we restate
its definition below.

\begin{defn}
\label{def:int_dim}
Let $B_S(p, \Delta)$ be the set of points in $S$ within a closed ball of radius
$\Delta$ around some $p \in S$ with respect to a metric $d$:
$B_S(p, \Delta) = \{ r \in S \colon d(p, r) \leq \Delta \}$.
Then, the {\bf expansion constant} of $S$ with respect to the metric $d$ is the
smallest $c \ge 2$ such that

\begin{equation}
| B_S(p, 2 \Delta) | \le c | B_S(p, \Delta) |\ \forall\ p \in S,\
\forall\ \Delta > 0.
\end{equation}

\end{defn}

The expansion constant is used heavily in the cover tree literature.  It is,
in some sense, a notion of instrinic dimensionality, and previous work has shown
that there are many scenarios where $c$ is independent of the number of points
in the dataset \citep{karger2002finding, langford2006,
krauthgamer2004navigating, ram2009}.  Note also that if points in $S \subset
\mathcal{H}$ are being drawn according to a stationary distribution $f(x)$, then
$c$ will converge to some finite value $c_f$ as $|S| \to \infty$.  To see this,
define $c_f$ as a generalization of the expansion constant for distributions.
$c_f \ge 2$ is the smallest value such that

\begin{equation}
\int_{\mathcal{B}_{\mathcal{H}}(p, 2 \Delta)} f(x) dx \le c_{f}
\int_{\mathcal{B}_{\mathcal{H}}(p, \Delta)} f(x)
dx
\end{equation}

\noindent for all $p \in \mathcal{H}$ and $\Delta > 0$ such that
$\int_{\mathcal{B}_{\mathcal{H}}(p, \Delta)} f(x) dx > 0$, and with
$\mathcal{B}_{\mathcal{H}}(p, \Delta)$ defined as the closed ball of radius
$\Delta$ in the space $\mathcal{H}$.

As a simple example, take $f(x)$ as a uniform spherical distribution in
$\mathcal{R}^d$: for any $|x| \le 1$, $f(x)$ is a constant; for $|x| > 1$, $f(x)
= 0$.  It is easy to see that $c_f$ in this situation is $2^d$, and thus for
some dataset $S$, $c$ must converge to that value as more and more points are
added to $S$.  Closed-form solutions for $c_f$ for more complex distributions
are less easy to derive; however, empirical speedup results from
\citet{langford2006} suggest the existence of datasets where $c$ is not strongly
dependent on $d$.  For instance, the \texttt{covtype} dataset has 54 dimensions
but the expansion constant is much smaller than other, lower-dimensional
datasets.
% To demonstrate this convergence, Table
%\ref{tab:ec_scaling} shows results for 10 trials of empirically calculated
%expansion constants for datasets drawn from this distribution, for varying $d$.

%\begin{table}[htb]
%\begin{center}
%\begin{tabular}{|l|c|c|c|c|}
%\hline
%$|S|$ & $d = 2$ & $d = 5$ & $d = 10$ & $d = 20$ \\
%\hline
%10      & 6     & 10    & 10    &         \\
%31      & 12    & 31    & 31    &         \\
%100     & 13    & 64    & 100   &         \\
%316     & 23    & 96    & 286   &         \\
%1000    & 26    & 126   & 673   &         \\
%3162    & 29    & 189   & 1348  &         \\
%10000   & 30    & 259   & 2646  &         \\
%31622   & 39    & 309   & 3858  &         \\
%100000  & 40    & 360   & 7162  &         \\
%316227  & 46    & 385   & 8297  &         \\
%1000000 &       &       &       &         \\
%$\infty$ ($c_f$)   & 4     & 32    & 1024  & 1048576 \\
%\hline
%\end{tabular}
%\end{center}
%\caption{Calculated $c$ for uniform hypersphere distribution as
%$|S|$ scales.}
%\label{tab:ec_scaling}
%\end{table}

%We are somewhat limited in the sizes of $|S|$ we can scale to, because
%calculating the expansion constant of a dataset is a computationally intensive
%task.  Nonetheless, we can see that the expansion constant moves towards the
%final value of $c_f$, though potentially from above ({\it The results are not
%done yet.  This paragraph may need rewriting.  But I strongly suspect the
%results will show what I've said.}).  We postulate that the behavior of more
%complex real-world datasets is similar.

There are some other important observations about the behavior of $c$.  Adding a
single point to $S$ may increase $c$ arbitrarily: consider a set $S$ distributed
entirely on the surface of a unit hypersphere.  If one adds a single point at
the origin, producing the set $S'$, then $c$ explodes to $|S'|$ whereas before
it may have been much smaller than $|S|$.  Adding a single point may also
decrease $c$ significantly.  Suppose one adds a point arbitrarily close to the
origin to $S'$; now, the expansion constant will be $|S'| / 2$.  Both of these
situations are degenerate cases not commonly encountered in real-world behavior;
we discuss them in order to point out that although we can bound the behavior of
$c$ as $|S| \to \infty$ for $S$ from a stationary distribution, we are not able
to easily say much about its convergence behavior.

The expansion constant can be used to show a few useful bounds on various
properties of the cover tree; we restate these results below, given some cover
tree built on a dataset $S$ with expansion constant $c$ and $|S| = N$:

\begin{itemize}
  \item {\bf Width bound:} no cover tree node has more than $c^4$ children
(Lemma 4.1, \cite{langford2006}).

  \item {\bf Depth bound:} the maximum depth of any node is $O(c^2 \log N)$
(Lemma 4.3, \cite{langford2006}).

  \item {\bf Space bound:} a cover tree has $O(N)$ nodes (Theorem 1,
\cite{langford2006}).
\end{itemize}

%\begin{lemma}
%\label{lem:width}
%(Lemma 4.1, \cite{langford2006}) The number of children of any cover tree node $\mathscr{N}_i$ is bounded by
%$c^4$, where $c$ is the expansion constant of the dataset the cover tree is
%built on, as defined in Definition \ref{def:int_dim}.
%\end{lemma}

%\begin{lemma}
%\label{lem:depth}
%(Lemma 4.3, \cite{langford2006}) The maximum depth of any point $p_r$ in a cover
%tree $\mathscr{T}_r$ is $O(c^2 \log N)$, where $N$ is the number of points in
%the dataset that $\mathscr{T}_r$ is built on.
%\end{lemma}

Lastly, we introduce a convenience lemma of our own which is a generalization of
the packing arguments used by \citet{langford2006}.  This is a more flexible
version of their argument.

\begin{lemma}
Consider a dataset $S$ with expansion constant $c$ and a subset $C \subseteq S$
such that every point in $C$ is separated by $\delta$.  Then, for any point
$p$ (which may or may not be in $S$), and any radius $\rho \delta > 0$:
\begin{equation}
| B_S(p, \rho \delta) \cap C | \le c^{2 + \lceil \log_2 \rho \rceil}.
\end{equation}
\label{lem:packing}
\end{lemma}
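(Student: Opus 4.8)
The plan is to reduce the claim to the defining inequality of the expansion constant by a doubling argument. First I would handle the trivial observation that the points of $C$ lying in $B_S(p, \rho\delta)$ are pairwise separated by $\delta$, so intuitively they cannot be ``too many'' inside a bounded ball; the expansion constant gives us exactly the quantitative handle we need. The key preliminary step is to pin down a point of $S$ near $p$ to serve as a center for the ball-doubling: if $B_S(p,\rho\delta)\cap C$ is empty the bound is trivial, so otherwise pick any $q \in B_S(p,\rho\delta)\cap C$. Every point of $C \cap B_S(p,\rho\delta)$ then lies within distance $2\rho\delta$ of $q$ by the triangle inequality, so it suffices to bound $|B_S(q, 2\rho\delta)\cap C|$ with $q\in S$.

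Next I would exploit separation to get a lower bound on the volume (in counting measure) occupied by each such point. Since the points of $C$ are $\delta$-separated, the balls $B_S(r, \delta/2)$ for $r \in C\cap B_S(q,2\rho\delta)$ are pairwise disjoint, and each is contained in $B_S(q, 2\rho\delta + \delta/2)$. Rather than tracking constants tightly, I would instead argue directly by repeated halving: starting from the ball $B_S(q, 2\rho\delta)$ and shrinking the radius by factors of two, after $k = \lceil \log_2(2\rho) \rceil + 1$ halvings the radius drops below $\delta$, and a ball of radius less than $\delta$ centered at a point of $S$ can contain at most one point of the $\delta$-separated set $C$ (here I use that $C\subseteq S$, so the center can be taken to realize the intersection, or more simply that any two points it contains would be within $\delta$, contradicting separation). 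Applying the expansion-constant inequality $|B_S(\cdot, 2\Delta)| \le c\,|B_S(\cdot,\Delta)|$ once per halving yields $|B_S(q, 2\rho\delta)| \le c^{k}\cdot 1$, and then $|B_S(p,\rho\delta)\cap C| \le |B_S(q, 2\rho\delta)\cap C| \le |B_S(q,2\rho\delta)| \le c^{k}$.

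The only remaining task is bookkeeping on the exponent: I need $k \le 2 + \lceil \log_2 \rho\rceil$. Since $\lceil \log_2(2\rho)\rceil = 1 + \lceil \log_2 \rho \rceil$, we get $k = 1 + \lceil \log_2(2\rho)\rceil = 2 + \lceil \log_2 \rho\rceil$, which matches exactly. I would double-check the edge case $\rho < 1$ (so $\lceil \log_2\rho\rceil \le 0$): the argument still goes through because even one application of the expansion inequality only helps, and the bound $c^{2+\lceil\log_2\rho\rceil}$ stays at least $c^1$ as long as $\rho > 1/2$, while for very small $\rho$ the ball $B_S(p,\rho\delta)$ already contains at most one point of $C$ and $c \ge 2$ makes the stated bound hold regardless; I would state this cleanly rather than split into cases. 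The main (minor) obstacle is simply making the ``shrink until the radius is below $\delta$, then there is at most one point'' step airtight with respect to where the centers live — that is, ensuring each intermediate ball can legitimately be re-centered at a point of $S$ so that Definition \ref{def:int_dim} applies; the cleanest fix is to never move the center at all and apply the inequality to $B_S(q,\cdot)$ throughout, since $q \in S$.
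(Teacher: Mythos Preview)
There is a genuine gap in the halving step. You correctly observe that once the radius drops below $\delta$ the ball $B_S(q,r)$ can contain at most one point of the $\delta$-separated set $C$. But the expansion-constant inequality you are iterating is a statement about $|B_S(q,\cdot)|$, not about $|B_S(q,\cdot)\cap C|$. So when you write
\[
|B_S(q,2\rho\delta)| \le c^{k}\cdot 1,
\]
the ``$\cdot\,1$'' would have to mean $|B_S(q,2\rho\delta/2^{k})| \le 1$, i.e.\ the small ball contains at most one point of $S$. Nothing in the hypotheses gives you that: $S$ may have arbitrarily many points clustered near $q$ even though $C$ is $\delta$-separated. Consequently the chain $|B_S(p,\rho\delta)\cap C|\le |B_S(q,2\rho\delta)|\le c^{k}$ breaks at the last inequality.

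The fix is precisely the disjoint-balls observation you wrote down and then set aside. The balls $B_S(c_j,\delta/2)$ for $c_j\in C\cap B_S(q,2\rho\delta)$ are pairwise disjoint subsets of a single big ball around a point of $S$. Summing their cardinalities and dividing by the smallest one bounds the number of such $c_j$ by a ratio of ball sizes, and \emph{that} ratio is what the expansion constant controls: roughly $|B_S(q,2\rho\delta)|/|B_S(q,\delta/2)|\le c^{2+\lceil\log_2\rho\rceil}$. This is exactly the packing argument the paper uses. Your recentering at $q\in C\subseteq S$ and the exponent bookkeeping $k=2+\lceil\log_2\rho\rceil$ are fine; you just need to keep the ``$\cap\,C$'' and pay for it with the disjoint-balls step rather than trying to bound $|B_S(q,2\rho\delta)|$ outright.
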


\begin{proof}
The proof is based on the packing argument from Lemma 4.1 in
\cite{langford2006}. Consider two cases: first, let $d(p, p_i) > \rho \delta$
for any $p_i \in S$. In this case, $B_S(p, \rho \delta) = \emptyset$ and the
lemma holds trivially.
Otherwise, let $p_i \in S$ be a point such that $d(p, p_i) \leq \rho \delta$.
Observe that $B_S(p, \rho \delta) \subseteq B_S(p_i, 2 \rho \delta)$.
Also, $| B_S(p_i, 2 \rho \delta) | = c^{2 + \lceil \log_2 \rho
\rceil} | B_S(p_i, \delta / 2) |$ by the definition of the expansion constant.
Because each point in $C$ is separated by $\delta$, the
number of points in $B_S(p, \rho \delta) \cap C$ is
bounded by the number of disjoint balls of radius $\delta / 2$ that can be
packed into $B_S(p, \rho \delta)$.  In the worst case, this packing is
perfect, and

\begin{equation}
|B_S(p, \rho \delta)| \le \frac{|B_S(p_i, 2 \rho \delta)|}{|B_S(p_i, \delta
/ 2)|} \le c^{2 + \lceil \log_2 \rho \rceil}.
\end{equation}
\end{proof}

\section{Tree imbalance}
\label{sec:imbalance}

It is well-known that imbalance in trees leads to degradation in performance;
for instance, a $kd$-tree node with every descendant in its left child except
one is effectively useless.  A $kd$-tree full of nodes like this will perform
abysmally for nearest neighbor search, and it is not hard to generate a
pathological dataset that will cause a $kd$-tree of this sort.

This sort of imbalance applies to all types of trees, not just $kd$-trees.  In
our situation, we are interested in a better understanding of this imbalance for
cover trees, and thus endeavor to introduce a more formal measure of imbalance
which is correlated with tree performance.  Numerous measures of tree
imbalance have already been established; one example is that proposed by
\citet{colless1982review}, and another is Sackin's index \citep{sackin1972good},
but we aim to capture a different measure of imbalance that utilizes the leveled
structure of the cover tree.

We already know each node in a cover tree is indexed with an integer level (or
scale).  In the explicit representation of the cover tree, each non-leaf node
has children at a lower level.  But these children need not be strictly one
level lower; see Figure \ref{fig:imbalance}.  In Figure
\ref{fig:imbalance-good}, each cover tree node has children that are strictly
one level lower; we will refer to this as a {\em perfectly balanced cover tree}.
Figure \ref{fig:imbalance-bad}, on the other hand, contains the node
$\mathscr{N}_m$ which has two children with scale two less than $s_m$.  We will
refer to this as an {\em imbalanced cover tree}.  Note that in our definition,
the balance of a cover tree has nothing to do with differing number of
descendants in each child branch but instead only missing levels.

\begin{figure}
\begin{subfigure}[b]{0.585\textwidth}
  \begin{center}
    \begin{tikzpicture}[>=latex,line join=bevel,scale=0.47]
  \pgfsetlinewidth{1bp}
\begin{scope}
  \pgfsetstrokecolor{black}
  \definecolor{strokecol}{rgb}{1.0,1.0,1.0};
  \pgfsetstrokecolor{strokecol}
  \definecolor{fillcol}{rgb}{1.0,1.0,1.0};
  \pgfsetfillcolor{fillcol}
  \filldraw (0bp,0bp) -- (0bp,180bp) -- (486bp,180bp) -- (486bp,0bp) -- cycle;
\end{scope}
\begin{scope}
  \pgfsetstrokecolor{black}
  \definecolor{strokecol}{rgb}{1.0,1.0,1.0};
  \pgfsetstrokecolor{strokecol}
  \definecolor{fillcol}{rgb}{1.0,1.0,1.0};
  \pgfsetfillcolor{fillcol}
  \filldraw (0bp,0bp) -- (0bp,180bp) -- (486bp,180bp) -- (486bp,0bp) -- cycle;
\end{scope}
\begin{scope}
  \pgfsetstrokecolor{black}
  \definecolor{strokecol}{rgb}{1.0,1.0,1.0};
  \pgfsetstrokecolor{strokecol}
  \definecolor{fillcol}{rgb}{1.0,1.0,1.0};
  \pgfsetfillcolor{fillcol}
  \filldraw (0bp,0bp) -- (0bp,180bp) -- (486bp,180bp) -- (486bp,0bp) -- cycle;
\end{scope}
  \pgfsetcolor{black}
  % Edge: c2 -> c6
  \draw [-] (228.43bp,74.834bp) .. controls (218.25bp,64.938bp) and (204.48bp,51.546bp)  .. (185.8bp,33.385bp);
  % Edge: c3 -> c9
  \draw [-] (387bp,71.697bp) .. controls (387bp,63.983bp) and (387bp,54.712bp)  .. (387bp,36.104bp);
  % Edge: c3 -> c10
  \draw [-] (401.57bp,74.834bp) .. controls (411.75bp,64.938bp) and (425.52bp,51.546bp)  .. (444.2bp,33.385bp);
  % Edge: c2 -> c8
  \draw [-] (257.57bp,74.834bp) .. controls (267.75bp,64.938bp) and (281.52bp,51.546bp)  .. (300.2bp,33.385bp);
  % Edge: c1 -> c4
  \draw [-] (84.43bp,74.834bp) .. controls (74.25bp,64.938bp) and (60.476bp,51.546bp)  .. (41.796bp,33.385bp);
  % Edge: c1 -> c5
  \draw [-] (99bp,71.697bp) .. controls (99bp,63.983bp) and (99bp,54.712bp)  .. (99bp,36.104bp);
  % Edge: root -> c1
  \draw [-] (221.75bp,150.67bp) .. controls (197.4bp,138.83bp) and (157.28bp,119.33bp)  .. (120.33bp,101.37bp);
  % Edge: root -> c2
  \draw [-] (243bp,143.7bp) .. controls (243bp,135.98bp) and (243bp,126.71bp)  .. (243bp,108.1bp);
  % Edge: c2 -> c7
  \draw [-] (243bp,71.697bp) .. controls (243bp,63.983bp) and (243bp,54.712bp)  .. (243bp,36.104bp);
  % Edge: root -> c3
  \draw [-] (264.25bp,150.67bp) .. controls (288.6bp,138.83bp) and (328.72bp,119.33bp)  .. (365.67bp,101.37bp);
  % Node: c7
\begin{scope}
  \definecolor{strokecol}{rgb}{0.0,0.0,0.0};
  \pgfsetstrokecolor{strokecol}
  \definecolor{fillcol}{rgb}{0.83,0.83,0.83};
  \pgfsetfillcolor{fillcol}
  \filldraw [opacity=1] (243bp,18bp) ellipse (27bp and 18bp);
  \draw (243bp,18bp) node {$\mathscr{N}_h$};
\end{scope}
  % Node: c9
\begin{scope}
  \definecolor{strokecol}{rgb}{0.0,0.0,0.0};
  \pgfsetstrokecolor{strokecol}
  \definecolor{fillcol}{rgb}{0.83,0.83,0.83};
  \pgfsetfillcolor{fillcol}
  \filldraw [opacity=1] (387bp,18bp) ellipse (27bp and 18bp);
  \draw (387bp,18bp) node {$\mathscr{N}_j$};
\end{scope}
  % Node: c8
\begin{scope}
  \definecolor{strokecol}{rgb}{0.0,0.0,0.0};
  \pgfsetstrokecolor{strokecol}
  \definecolor{fillcol}{rgb}{0.83,0.83,0.83};
  \pgfsetfillcolor{fillcol}
  \filldraw [opacity=1] (315bp,18bp) ellipse (27bp and 18bp);
  \draw (315bp,18bp) node {$\mathscr{N}_i$};
\end{scope}
  % Node: c3
\begin{scope}
  \definecolor{strokecol}{rgb}{0.0,0.0,0.0};
  \pgfsetstrokecolor{strokecol}
  \definecolor{fillcol}{rgb}{0.83,0.83,0.83};
  \pgfsetfillcolor{fillcol}
  \filldraw [opacity=1] (387bp,90bp) ellipse (27bp and 18bp);
  \draw (387bp,90bp) node {$\mathscr{N}_d$};
\end{scope}
  % Node: c2
\begin{scope}
  \definecolor{strokecol}{rgb}{0.0,0.0,0.0};
  \pgfsetstrokecolor{strokecol}
  \definecolor{fillcol}{rgb}{0.83,0.83,0.83};
  \pgfsetfillcolor{fillcol}
  \filldraw [opacity=1] (243bp,90bp) ellipse (27bp and 18bp);
  \draw (243bp,90bp) node {$\mathscr{N}_c$};
\end{scope}
  % Node: c1
\begin{scope}
  \definecolor{strokecol}{rgb}{0.0,0.0,0.0};
  \pgfsetstrokecolor{strokecol}
  \definecolor{fillcol}{rgb}{0.83,0.83,0.83};
  \pgfsetfillcolor{fillcol}
  \filldraw [opacity=1] (99bp,90bp) ellipse (27bp and 18bp);
  \draw (99bp,90bp) node {$\mathscr{N}_b$};
\end{scope}
  % Node: c10
\begin{scope}
  \definecolor{strokecol}{rgb}{0.0,0.0,0.0};
  \pgfsetstrokecolor{strokecol}
  \definecolor{fillcol}{rgb}{0.83,0.83,0.83};
  \pgfsetfillcolor{fillcol}
  \filldraw [opacity=1] (459bp,18bp) ellipse (27bp and 18bp);
  \draw (459bp,18bp) node {$\mathscr{N}_k$};
\end{scope}
  % Node: root
\begin{scope}
  \definecolor{strokecol}{rgb}{0.0,0.0,0.0};
  \pgfsetstrokecolor{strokecol}
  \definecolor{fillcol}{rgb}{0.83,0.83,0.83};
  \pgfsetfillcolor{fillcol}
  \filldraw [opacity=1] (243bp,162bp) ellipse (27bp and 18bp);
  \draw (243bp,162bp) node {$\mathscr{N}_a$};
\end{scope}
  % Node: c6
\begin{scope}
  \definecolor{strokecol}{rgb}{0.0,0.0,0.0};
  \pgfsetstrokecolor{strokecol}
  \definecolor{fillcol}{rgb}{0.83,0.83,0.83};
  \pgfsetfillcolor{fillcol}
  \filldraw [opacity=1] (171bp,18bp) ellipse (27bp and 18bp);
  \draw (171bp,18bp) node {$\mathscr{N}_g$};
\end{scope}
  % Node: c5
\begin{scope}
  \definecolor{strokecol}{rgb}{0.0,0.0,0.0};
  \pgfsetstrokecolor{strokecol}
  \definecolor{fillcol}{rgb}{0.83,0.83,0.83};
  \pgfsetfillcolor{fillcol}
  \filldraw [opacity=1] (99bp,18bp) ellipse (27bp and 18bp);
  \draw (99bp,18bp) node {$\mathscr{N}_f$};
\end{scope}
  % Node: c4
\begin{scope}
  \definecolor{strokecol}{rgb}{0.0,0.0,0.0};
  \pgfsetstrokecolor{strokecol}
  \definecolor{fillcol}{rgb}{0.83,0.83,0.83};
  \pgfsetfillcolor{fillcol}
  \filldraw [opacity=1] (27bp,18bp) ellipse (27bp and 18bp);
  \draw (27bp,18bp) node {$\mathscr{N}_e$};
\end{scope}
\draw[gray,thin,dashed] (0bp,58bp) -- (500bp,58bp);
\draw[gray,thin,dashed] (0bp,133bp) -- (500bp,133bp);

\draw (470bp,180bp) node[darkgray] {\scriptsize $s_a$};
\draw (470bp,122bp) node[darkgray] {\scriptsize $s_a - 1$};
\draw (470bp,47bp) node[darkgray] {\scriptsize $s_a - 2$};
\end{tikzpicture}
  \end{center}
  \caption{Balanced cover tree.}
  \label{fig:imbalance-good}
\end{subfigure}
\begin{subfigure}[b]{0.415\textwidth}
  \begin{center}
    \begin{tikzpicture}[>=latex,line join=bevel,scale=0.47]
  \pgfsetlinewidth{1bp}
\begin{scope}
  \pgfsetstrokecolor{black}
  \definecolor{strokecol}{rgb}{1.0,1.0,1.0};
  \pgfsetstrokecolor{strokecol}
  \definecolor{fillcol}{rgb}{1.0,1.0,1.0};
  \pgfsetfillcolor{fillcol}
  \filldraw (0bp,0bp) -- (0bp,180bp) -- (342bp,180bp) -- (342bp,0bp) -- cycle;
\end{scope}
  \pgfsetcolor{black}
  % Edge: c2 -> c6
  \draw [-] (156.43bp,74.834bp) .. controls (146.25bp,64.938bp) and (132.48bp,51.546bp)  .. (113.8bp,33.385bp);
  % Edge: root -> c4
  \draw [-] (156.4bp,146.6bp) .. controls (131.06bp,121.62bp) and (78.82bp,70.101bp)  .. (41.639bp,33.435bp);
  % Edge: c2 -> c8
  \draw [-] (185.57bp,74.834bp) .. controls (195.75bp,64.938bp) and (209.52bp,51.546bp)  .. (228.2bp,33.385bp);
  % Edge: root -> c10
  \draw [-] (185.6bp,146.6bp) .. controls (210.94bp,121.62bp) and (263.18bp,70.101bp)  .. (300.36bp,33.435bp);
  % Edge: root -> c2
  \draw [-] (171bp,143.7bp) .. controls (171bp,135.98bp) and (171bp,126.71bp)  .. (171bp,108.1bp);
  % Edge: c2 -> c7
  \draw [-] (171bp,71.697bp) .. controls (171bp,63.983bp) and (171bp,54.712bp)  .. (171bp,36.104bp);
  % Node: c7
\begin{scope}
  \definecolor{strokecol}{rgb}{0.0,0.0,0.0};
  \pgfsetstrokecolor{strokecol}
  \definecolor{fillcol}{rgb}{0.83,0.83,0.83};
  \pgfsetfillcolor{fillcol}
  \filldraw [opacity=1] (171bp,18bp) ellipse (27bp and 18bp);
  \draw (171bp,18bp) node {$\mathscr{N}_r$};
\end{scope}
  % Node: c8
\begin{scope}
  \definecolor{strokecol}{rgb}{0.0,0.0,0.0};
  \pgfsetstrokecolor{strokecol}
  \definecolor{fillcol}{rgb}{0.83,0.83,0.83};
  \pgfsetfillcolor{fillcol}
  \filldraw [opacity=1] (243bp,18bp) ellipse (27bp and 18bp);
  \draw (243bp,18bp) node {$\mathscr{N}_s$};
\end{scope}
  % Node: c2
\begin{scope}
  \definecolor{strokecol}{rgb}{0.0,0.0,0.0};
  \pgfsetstrokecolor{strokecol}
  \definecolor{fillcol}{rgb}{0.83,0.83,0.83};
  \pgfsetfillcolor{fillcol}
  \filldraw [opacity=1] (171bp,90bp) ellipse (27bp and 18bp);
  \draw (171bp,90bp) node {$\mathscr{N}_n$};
\end{scope}
  % Node: c10
\begin{scope}
  \definecolor{strokecol}{rgb}{0.0,0.0,0.0};
  \pgfsetstrokecolor{strokecol}
  \definecolor{fillcol}{rgb}{0.83,0.83,0.83};
  \pgfsetfillcolor{fillcol}
  \filldraw [opacity=1] (315bp,18bp) ellipse (27bp and 18bp);
  \draw (315bp,18bp) node {$\mathscr{N}_t$};
\end{scope}
  % Node: root
\begin{scope}
  \definecolor{strokecol}{rgb}{0.0,0.0,0.0};
  \pgfsetstrokecolor{strokecol}
  \definecolor{fillcol}{rgb}{0.83,0.83,0.83};
  \pgfsetfillcolor{fillcol}
  \filldraw [opacity=1] (171bp,162bp) ellipse (27bp and 18bp);
  \draw (171bp,162bp) node {$\mathscr{N}_m$};
\end{scope}
  % Node: c6
\begin{scope}
  \definecolor{strokecol}{rgb}{0.0,0.0,0.0};
  \pgfsetstrokecolor{strokecol}
  \definecolor{fillcol}{rgb}{0.83,0.83,0.83};
  \pgfsetfillcolor{fillcol}
  \filldraw [opacity=1] (99bp,18bp) ellipse (27bp and 18bp);
  \draw (99bp,18bp) node {$\mathscr{N}_q$};
\end{scope}
  % Node: c4
\begin{scope}
  \definecolor{strokecol}{rgb}{0.0,0.0,0.0};
  \pgfsetstrokecolor{strokecol}
  \definecolor{fillcol}{rgb}{0.83,0.83,0.83};
  \pgfsetfillcolor{fillcol}
  \filldraw [opacity=1] (27bp,18bp) ellipse (27bp and 18bp);
  \draw (27bp,18bp) node {$\mathscr{N}_p$};
\end{scope}
\draw[darkgray, thin] (-30bp,0bp) -- (-30bp,190bp);
\draw[gray,thin,dashed] (0bp,58bp) -- (350bp,58bp);
\draw[gray,thin,dashed] (0bp,133bp) -- (350bp,133bp);

\draw (320bp,180bp) node[darkgray] {\scriptsize $s_m$};
\draw (320bp,122bp) node[darkgray] {\scriptsize $s_m - 1$};
\draw (320bp,47bp) node[darkgray] {\scriptsize $s_m - 2$};
\end{tikzpicture}
  \end{center}
  \caption{Imbalanced cover tree.}
  \label{fig:imbalance-bad}
\end{subfigure}
\caption{Balanced and imbalanced cover trees.}
\label{fig:imbalance}
\end{figure}
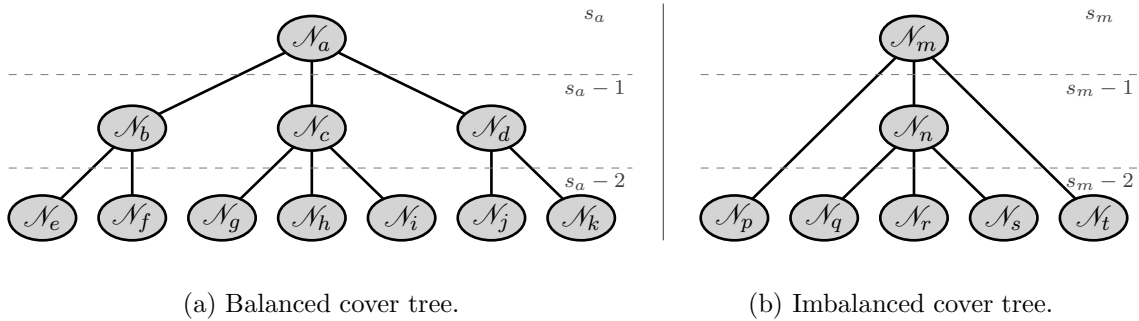

An imbalanced cover tree can happen in practice, and in the worst cases, the
imbalance may be far worse than the simple graphs of Figure \ref{fig:imbalance}.
Consider a dataset with a single outlier which is very far away from all of the
other points\footnote{Note also that for an outlier sufficiently far away, the
expansion constant is $N - 1$, so we should expect poor performance with the
cover tree anyway.}.  Figure \ref{fig:outlier} shows what happens in
this situation: the root node has two children; one of these children has only
the outlier as a descendant, and the other child has the rest of the points in
the dataset as a descendant.  In fact, it is easy to find datasets with a
handful of outliers that give rise to a chain-like structure at the top of the
tree: see Figure \ref{fig:outliers} for an illustration\footnote{As a side note,
this behavior is not limited to cover trees, and can happen to mean-split
$kd$-trees too, especially in higher dimensions.}.

\begin{figure}
\begin{center}
  \input{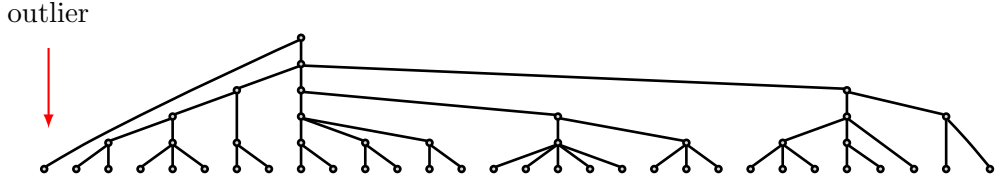}
\end{center}
\caption{Single-outlier cover tree.}
\label{fig:outlier}
\end{figure}

\begin{figure}
\begin{center}
  \input{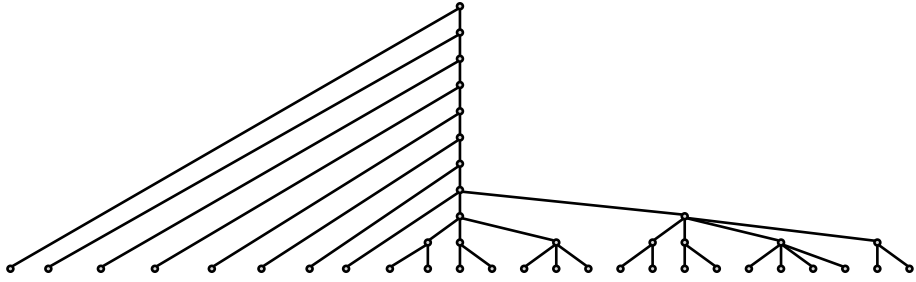}
\end{center}
\caption{A multiple-outlier cover tree.}
\label{fig:outliers}
\end{figure}

A tree that has this chain-like structure all the way down, which is similar to
the $kd$-tree example at the beginning of this section, is going to perform
horrendously; motivated by this observation, we define a measure of tree
imbalance.

\begin{defn}
The {\it cover node imbalance} $i_n(\mathscr{N}_i)$ for a cover tree node
$\mathscr{N}_i$ with scale $s_i$ in the cover tree $\mathscr{T}$ is defined as
the cumulative number of missing levels between the node and its parent
$\mathscr{N}_p$ (which has scale $s_p$).  If
the node is a leaf child (that is, $s_i = -\infty$), then number of missing
levels is defined as the difference between $s_p$ and $s_{\min} - 1$ where
$s_{\min}$ is the smallest scale of a non-leaf node in $\mathscr{T}$.  If
$\mathscr{N}_i$ is the root of the tree, then the cover node imbalance is 0.
Explicitly written, this calculation is

\begin{equation}
i_n(\mathscr{N}_i) = \begin{dcases*}
  s_p - s_i - 1 & if $\mathscr{N}_i$ is not a leaf and not the root node \\
  \max(s_p - s_{\min} - 1, \; 0) & if $\mathscr{N}_i$ is a leaf \\
  0 & if $\mathscr{N}_i$ is the root node.
  \end{dcases*}
  \label{eqn_node_imbalance}
\end{equation}
\end{defn}

This simple definition of cover node imbalance is easy to calculate, and using
it, we can generalize to a measure of imbalance for the full tree.

\begin{defn}
\label{def:imbalance}
The {\it cover tree imbalance} $i_t(\mathscr{T})$ for a cover tree $\mathscr{T}$
is defined as the cumulative number of missing levels in the tree.  This can be
expressed as a function of cover node imbalances easily:

\begin{equation}
i_t(\mathscr{T}) = \sum_{\mathscr{N}_i \in \mathscr{T}} i_n(\mathscr{N}_i).
\end{equation}
\end{defn}

A perfectly balanced cover tree $\mathscr{T}_b$ with no missing levels has
imbalance $i_t(\mathscr{T}_b) = 0$ (for instance, Figure
\ref{fig:imbalance-good}).  A worst-case cover tree $\mathscr{T}_w$ which is
entirely a chain-like structure with maximum scale $s_{\max}$ and minimum scale
$s_{\min}$ will have imbalance $i_t(\mathscr{T}_w) \sim N (s_{\max} -
s_{\min})$.  Because of this chain-like structure, each level has only one node
and thus there are at least $N$ levels; or, $s_{\max} - s_{\min} \ge N$, meaning
that in the worst case the imbalance is quadratic in $N$\footnote{Note that in
this situation, $c \sim N$ also.}.

However, for most real-world datasets with the cover tree implementation in {\bf
mlpack} \citep{mlpack2013} and the reference implementation
\citep{langford2006}, the tree imbalance is near-linear with the number of
points.  Generally, most of the cover tree imbalance is contributed by leaf
nodes whose parent has scale greater than $s_{\min}$.  At this time, no cover
tree construction algorithm specifically aims to minimize imbalance.

%To demonstrate this, Table \ref{tab:emp_imbalance} shows empirically
%calculated cover tree imbalance for cover trees on a number of datasets built
%with both the cover tree implementation in {\bf mlpack} \citep{mlpack2013} and
%the original reference implementation \citep{langford2006}.

%\begin{table}[tb]
%\begin{center}
%\begin{tabular}{|c|c|c|r|r|}
%\hline
%Dataset & $N$ & $d$ & reference & mlpack \\
%\hline
%wine & & & & \\
%isolet & & & & \\
%corel & & & & \\
%mnist & 70000 & 784 & & \\
%bio & & & & \\
%phy & & & & \\
%covertype & & & & \\
%LCDM & & & & \\
%\hline
%\end{tabular}
%\end{center}
%\caption{Empirical tree imbalance calculations.}
%\label{tab:emp_imbalance}
%\end{table}

%{\bf TODO: Can any bound be proven on $i_t(\mathscr{T})$?  I haven't thought of
%anything successfully yet.}

\section{General runtime bound}
\label{sec:bound}

Perhaps more interesting than measures of tree imbalance is the way cover trees
are actually used in dual-tree algorithms.  Although cover trees were originally
intended for nearest neighbor search \citep[See Algorithm
\texttt{Find-All-Nearest},][]{langford2006}, they can be adapted to a wide
variety of problems: minimum
spanning tree calculation \citep{march2010euclidean}, approximate nearest neighbor
search \citep{ram2009rank}, Gaussian processes posterior calculation
\citep{moore2014fast}, and max-kernel search \citep{curtin2014dual} are some
examples.  Further, through the tree-independent dual-tree algorithm abstraction
of \citet{curtin2013tree}, other existing dual-tree algorithms can easily be
adapted for use with cover trees.

In the framework of tree-independent dual-tree algorithms, all that is necessary
to describe a dual-tree algorithm is a point-to-point base case function
(\texttt{BaseCase()}) and a node-to-node pruning rule (\texttt{Score()}).  These
functions, which are often very straightforward, are then paired with a type of
tree and a pruning dual-tree traversal to produce a working algorithm.  In later
sections, we will consider specific examples.

\begin{algorithm}[tb]
  \begin{algorithmic}[1]
    \STATE {\bfseries Input:} query node $\mathscr{N}_q$, set of reference nodes
$R$ \label{alg:line:ct-dual-input}
    \STATE {\bfseries Output:} none

    \medskip
    \STATE $s^{\max}_r \gets \max_{\mathscr{N}_r \in R} s_r$
\label{alg:line:ct-dual-srmax}
    \IF{$(s_q < s^{\max}_r)$} \label{alg:line:ct-dual-ref-recursion-start}
      \STATE \COMMENT{Perform a reference recursion.}
      \FORALL{$\mathscr{N}_r \in R$} \label{alg:line:ct-dual-base-case-start}
        \STATE \texttt{BaseCase($p_q$, $p_r$)}
      \ENDFOR \label{alg:line:ct-dual-base-case-end}
      \STATE $R_r \gets \{ \mathscr{N}_r \in R : s_r = s^{\max}_r \}$
\label{alg:line:ct-dual-ref-set}
      \STATE $R_{r - 1} \gets \{ \mathscr{C}(\mathscr{N}_r) : \mathscr{N}_r \in
R_r \} \cup (R \setminus R_r)$ \label{alg:line:ct-dual-ref-children}
      \STATE $R'_{r - 1} \gets \{ \mathscr{N}_r \in R_{r - 1} :
\texttt{Score(}\mathscr{N}_q\texttt{,} \mathscr{N}_r\texttt{)} \ne \infty \}$
\label{alg:line:ct-dual-ref-score}
      \STATE recurse with $\mathscr{N}_q$ and $R'_{r - 1}$
\label{alg:line:ct-dual-ref-recursion-end}
    \ELSE \label{alg:line:ct-dual-query-recursion-start}
      \STATE \COMMENT{Perform a query recursion.}
      \FORALL{$\mathscr{N}_{qc} \in \mathscr{C}(\mathscr{N}_q)$}
        \STATE $R' \gets \{ \mathscr{N}_r \in R :
\texttt{Score(}\mathscr{N}_q\texttt{,} \mathscr{N}_r\texttt{)} \ne \infty \}$
\label{alg:line:ct-dual-query-pruning}
        \STATE recurse with $\mathscr{N}_{qc}$ and $R'$
\label{alg:line:ct-dual-query-recursion}
      \ENDFOR \label{alg:line:ct-dual-query-recursion-end}
    \ENDIF
  \end{algorithmic}
  \caption{The standard pruning dual-tree traversal for cover trees.}
  \label{alg:cover-tree-dual}
\end{algorithm}

When using cover trees, the typical pruning dual-tree traversal is an adapted
form of the original nearest neighbor search algorithm \citep[see
\texttt{Find-All-Nearest},][]{langford2006}; this traversal is implemented in
both the cover tree reference implementation and in the more flexible {\bf
mlpack} library \citep{mlpack2013}.  The problem-independent traversal is given
in Algorithm \ref{alg:cover-tree-dual} and was originally presented by
\citet{curtin2014dual}.  Initially, it is called with the root of the query tree
and a reference set $R$ containing only the root of the reference tree.

This dual-tree recursion is a depth-first recursion in the query tree and a
breadth-first recursion in the reference tree; to this end, the recursion
maintains one query node $\mathscr{N}_q$ and a reference set $R$.  The set $R$
may contain reference nodes with many different scales; the maximum scale in the
reference set is $s_r^{\max}$ (line \ref{alg:line:ct-dual-srmax}).  Each single
recursion will descend either the query tree or the reference tree, not both;
the conditional in line \ref{alg:line:ct-dual-ref-recursion-start}, which
determines whether the query or reference tree will be recursed, is aimed at
keeping the relative scales of query nodes and reference nodes close.

A query recursion (lines
\ref{alg:line:ct-dual-query-recursion-start}--\ref{alg:line:ct-dual-query-recursion-end})
is straightforward: for each child $\mathscr{N}_{qc}$ of $\mathscr{N}_q$, the
node combinations $(\mathscr{N}_{qc}, \mathscr{N}_r)$ are scored for each
$\mathscr{N}_r$ in the reference set $R$.  If possible, these combinations are
pruned to form the set $R'$ (line \ref{alg:line:ct-dual-query-recursion}) by
checking the output of the \texttt{Score()} function, and then the algorithm
recurses with $\mathscr{N}_{qc}$ and $R'$.

A reference recursion (lines
\ref{alg:line:ct-dual-ref-recursion-start}--\ref{alg:line:ct-dual-ref-recursion-end})
is similar to a query recursion, but the pruning strategy is significantly more
complicated.  Given $R$, we calculate $R_r$, which is the set of nodes in $R$
that have scale $s_r^{\max}$.  Then, for each node $\mathscr{N}_r$ in the set of
children of nodes in $R_r$, the node combinations $(\mathscr{N}_q,
\mathscr{N}_r)$ are scored and pruned if possible.  Those reference nodes that
were not pruned form the set $R'_{r - 1}$.  Then, this set is combined with $R
\setminus R_r$---that is, each of the nodes in $R$ that was {\em not} recursed
into---to produce $R'$, and the algorithm recurses with $\mathscr{N}_q$ and the
reference set $R'$.

The reference recursion only recurses into the top-level subset of the reference
nodes in order to preserve the separation invariant.  It is easy to show that
every pair of points held in nodes in $R$ is separated by at least
$2^{s_r^{\max}}$:

\begin{lemma}
For all nodes $\mathscr{N}_i, \mathscr{N}_j \in R$ (in the context of Algorithm
\ref{alg:cover-tree-dual}) which contain points $p_i$ and $p_j$, respectively,
$d(p_i, p_j) > 2^{s_r^{\max}}$, with $s_r^{\max}$ defined as in line
\ref{alg:line:ct-dual-srmax}.
\end{lemma}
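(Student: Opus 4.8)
The plan is to prove by induction on the recursion in Algorithm~\ref{alg:cover-tree-dual} a slightly stronger statement: at every invocation with reference set $R$, \emph{every} point held in a node of $R$ belongs to the implicit cover-tree level $C_{s_r^{\max}}$, where $s_r^{\max} = \max_{\mathscr{N}_r \in R} s_r$ as in line~\ref{alg:line:ct-dual-srmax}. The lemma follows immediately: distinct nodes of $R$ hold distinct points (a node leaves $R$ exactly when its children are inserted during a reference recursion, and since all nodes associated with a fixed point lie on one root-to-leaf path, $R$ never contains both a node and a descendant of it), and two distinct points of $C_{s_r^{\max}}$ are separated by more than $2^{s_r^{\max}}$ by the cover tree's separation invariant.

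For the induction, the base case is the initial call, where $R$ is the singleton containing the reference root, whose point sits at the topmost level and so trivially lies in $C_{s_r^{\max}}$. A query recursion (lines~\ref{alg:line:ct-dual-query-recursion-start}--\ref{alg:line:ct-dual-query-recursion-end}) recurses with a reference set $R' \subseteq R$, so its maximum scale $t = \max_{\mathscr{N}_r \in R'} s_r$ satisfies $t \le s_r^{\max}$; every point of $R'$ lies in $C_{s_r^{\max}} \subseteq C_t$ by the inductive hypothesis and the nesting invariant, and the statement is preserved. A reference recursion (lines~\ref{alg:line:ct-dual-ref-recursion-start}--\ref{alg:line:ct-dual-ref-recursion-end}) recurses with $R' \subseteq R_{r-1} = \{\mathscr{C}(\mathscr{N}_r) : \mathscr{N}_r \in R_r\} \cup (R \setminus R_r)$, where $R_r$ is the set of nodes of $R$ at scale $s_r^{\max}$. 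Nodes in $R \setminus R_r$ have scale strictly below $s_r^{\max}$, so by the inductive hypothesis and nesting their points lie in $C_{s_r^{\max}} \subseteq C_{s_r^{\max} - 1}$; and, using the auxiliary fact that a child (in the explicit representation) of a scale-$s$ node carries a point in $C_{s-1}$, every child of a node in $R_r$ contributes a point in $C_{s_r^{\max}-1}$. Hence every node of $R_{r-1}$ has scale at most $s_r^{\max}-1$ and holds a point in $C_{s_r^{\max}-1}$, and since the new maximum scale $t \le s_r^{\max}-1$, one further application of nesting gives the claim for the recursive call.

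The step needing the most care is the auxiliary fact about the explicit representation: a child of a scale-$s$ node has its point in $C_{s-1}$. I would prove it by unwinding the coalescing rule --- if the child is the self-child, its point equals the parent's point, which lies in $C_s \subseteq C_{s-1}$; otherwise the child comes from a genuine (non-self) implicit child at level $s-1$, possibly after the explicit construction collapses a chain of pure self-children below it, and collapsing self-children never changes the associated point, so the point remains in $C_{s-1}$ --- and state it as a short preliminary observation. I would also note explicitly that each reference recursion drops $s_r^{\max}$ by at least one (children of scale-$s_r^{\max}$ nodes have scale at most $s_r^{\max}-1$, and the nodes left untouched in $R \setminus R_r$ already have smaller scale), which is precisely what lets the nesting invariant absorb the change; everything else is routine bookkeeping with the cover-tree invariants.
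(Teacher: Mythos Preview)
Your proof is correct and follows essentially the same line as the paper's: both argue by induction that the points held by nodes in $R$ lie in $C_{s_r^{\max}}$ (you take this as the explicit inductive invariant; the paper establishes it case-by-case via the existence of implicit nodes at scale $s_r^{\max}-1$ and records the same conclusion immediately after the proof), and both then invoke the separation invariant. Your version is slightly more careful---explicitly handling the query-recursion case and isolating the auxiliary fact about explicit children---but the core argument is identical.
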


\begin{proof}
This proof is by induction.  If $|R| = 1$, such as during the first reference
recursion, the result obviously holds.  Now consider any reference set $R$ and
assume the statement of the lemma holds for this set $R$, and define
$s_r^{\max}$ as the maximum scale of any node in $R$.  Construct the set
$R_{r - 1}$ as in line \ref{alg:line:ct-dual-ref-children} of Algorithm
\ref{alg:cover-tree-dual}; if $| R_{r - 1} | \le 1$, then $R_{r - 1}$ satisfies
the desired property.

Otherwise, take any $\mathscr{N}_i, \mathscr{N}_j$ in $R_{r - 1}$, with points
$p_i$ and $p_j$, respectively, and scales $s_i$ and $s_j$, respectively.
Clearly, if $s_i = s_j = s_r^{\max} - 1$, then by the separation invariant
$d(p_i, p_j) > 2^{s_r^{\max} - 1}$.

Now suppose that $s_i < s_r^{\max} - 1$.  This implies that there exists some
implicit cover tree node with point $p_i$ and scale $s_r^{\max} - 1$ (as well an
implicit child of this node $p_i$ with scale $s_r^{\max} - 2$ and so forth until
one of these implicit nodes has child $p_i$ with scale $s_i$).  Because the
separation invariant applies to both implicit and explicit representations of
the tree, we conclude that $d(p_i, p_r) > 2^{s_r^{\max}} - 1$.  The same
argument may be made for the case where $s_j < s_r^{\max} - 1$, with the same
conclusion.

We may therefore conclude that each point of each node in $R_{r - 1}$ is
separated by $2^{s_r^{\max} - 1}$.  Note that $R'_{r - 1} \subseteq R_{r - 1}$
and that $R \setminus R_{r - 1} \subseteq R$ in order to see that this condition
holds for all nodes in $R' = R'_{r - 1} \cup (R \setminus R_{r - 1})$.

Because we have shown that the condition holds for the initial reference set and
for any reference set produced by a reference recursion, we have shown that the
statement of the lemma is true.
\end{proof}

This observation means that the set of points $P$ held by all nodes in $R$ is
always a subset of $C_{s_r^{\max}}$.  This fact will be useful in our later
runtime proofs.

Next, we develop notions with which to understand the behavior of the cover tree
dual-tree traversal when the datasets are of significantly different scale
distributions.

If the datasets are similar in scale distribution (that is, inter-point distances
tend to follow the same distribution), then the recursion will alternate between
query recursions and reference recursions.  But if the query set contains points
which are, in general, much farther apart than the reference set, then the
recursion will start with many query recursions before reaching a reference
recursion.  The converse case also holds.  We are interested in formalizing
this notion of scale distribution; therefore, define the following
dataset-dependent constants for the query set $S_q$ and the reference set $S_r$:

\begin{itemize}
  \item $\eta_q$: the largest pairwise distance in $S_q$
  \item $\delta_q$: the smallest nonzero pairwise distance in $S_q$
  \item $\eta_r$: the largest pairwise distance in $S_r$
  \item $\delta_r$: the smallest nonzero pairwise distance in $S_r$
\end{itemize}

These constants are directly related to the aspect ratio of the datasets;
indeed, $\eta_q / \delta_q$ is exactly the aspect ratio of $S_q$.  Further,
let us define and bound the top and bottom levels of each tree:

\begin{itemize}
  \item The {\it top scale} $s_q^T$ of the query tree $\mathscr{T}_q$ is such
that as $\lceil \log_2(\eta_q) \rceil - 1 \le s_q^T \le \lceil \log_2(\eta_q)
\rceil$.
  \item The {\it minimum scale} of the query tree $\mathscr{T}_r$ is defined as
$s_q^{\min} = \lceil \log_2(\delta_q) \rceil$.
  \item The top scale $s_r^T$ of the reference tree $\mathscr{T}_r$ is such that
as $\lceil \log_2(\eta_r) \rceil - 1 \le s_r^T \le \lceil \log_2(\eta_r)
\rceil$.
  \item The minimum scale of the reference tree $\mathscr{T}_r$ is defined as
$s_r^{\min} = \lceil \log_2(\delta_r) \rceil$.
\end{itemize}

Note that the minimum scale is not the minimum scale of {\it any} cover tree
node (that would be $-\infty$), but the minimum scale of any non-leaf node in
the tree.

Suppose that our datasets are of a similar scale distribution: $s_q^T = s_r^T$,
and $s_q^{\min} = s_r^{\min}$.  In this setting we will have alternating query
and reference recursions.  But if this is not the case, then we have extra
reference recursions before the first query recursion or after the last query
recursion (situations where both these cases happen are possible).  Motivated by
this observation, let us quantify these extra reference recursions:

\begin{lemma}
\label{lem:extcase1}
For a dual-tree algorithm with $S_q \sim S_r \sim O(N)$ using cover trees and
the traversal given in Algorithm \ref{alg:cover-tree-dual}, the number of extra
reference recursions that happen before the first query recursion is bounded by

\begin{equation}
\min\left(O(N), \log_2(\eta_r / \eta_q) - 1\right).
\end{equation}
\end{lemma}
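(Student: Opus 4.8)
The plan is to follow a single quantity through the run of Algorithm~\ref{alg:cover-tree-dual}: the maximum reference scale $s_r^{\max}$. The traversal is first invoked with $\mathscr{N}_q$ the root of $\mathscr{T}_q$, whose scale is $s_q^T$, and with $R$ the singleton set holding the root of $\mathscr{T}_r$, so that initially $s_r^{\max} = s_r^T$. A reference recursion never replaces $\mathscr{N}_q$, so throughout the entire initial block of reference recursions the query scale stays equal to $s_q^T$; by the test on line~\ref{alg:line:ct-dual-ref-recursion-start}, such a recursion is taken exactly while $s_q^T < s_r^{\max}$. Thus the whole question reduces to: how far must $s_r^{\max}$ fall, starting from $s_r^T$, before it is no longer larger than $s_q^T$?

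The key step is to show that each reference recursion decreases $s_r^{\max}$ by at least one. In a reference recursion, the subset $R_r$ of nodes of $R$ at scale $s_r^{\max}$ is removed (line~\ref{alg:line:ct-dual-ref-set}) and replaced by the children of its members (line~\ref{alg:line:ct-dual-ref-children}); in the explicit cover tree every child of a scale-$s$ node has scale at most $s-1$ (it may in fact be much smaller, which only helps). Every remaining node of $R \setminus R_r$ already has scale strictly below $s_r^{\max}$, and the \texttt{Score()} filter on line~\ref{alg:line:ct-dual-ref-score} only deletes nodes, never adds them. Hence the new reference set $R'_{r-1}$ contains no node of scale $\geq s_r^{\max}$, i.e., its maximum scale is at most $s_r^{\max}-1$ (or $-\infty$, if everything left is a leaf). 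Starting from $s_r^T$ and dropping by at least one each time, after at most $s_r^T - s_q^T$ reference recursions we have $s_r^{\max} \leq s_q^T$, and the next recursion is a query recursion. Therefore the number of reference recursions preceding the first query recursion is at most $\max(0,\, s_r^T - s_q^T)$; in particular it is zero whenever $s_q^T \geq s_r^T$, which is exactly why these recursions are ``extra'' only when the two datasets have mismatched top scales.

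It remains to translate $s_r^T - s_q^T$ into the two claimed quantities. For the first, I would substitute the defining bounds $s_r^T \leq \lceil \log_2 \eta_r \rceil$ and $s_q^T \geq \lceil \log_2 \eta_q \rceil - 1$ and simplify the ceilings, obtaining $s_r^T - s_q^T \leq \log_2(\eta_r / \eta_q) + O(1)$, which is the first term of the claimed minimum (the statement records the additive constant as $-1$, and the whole term is to be read with an implicit $\max(0,\cdot)$ since it is non-positive when $\eta_r \leq \eta_q$). For the $O(N)$ term, note that by the space bound $\mathscr{T}_r$ has $O(N)$ nodes and hence only $O(N)$ distinct scales; since each reference recursion lowers $s_r^{\max}$ to a strictly smaller scale that is still occupied by a reference node, the sequence of values of $s_r^{\max}$ is strictly decreasing within a set of size $O(N)$, so at most $O(N)$ reference recursions can occur. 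Taking the minimum of the two estimates finishes the proof.

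The step I expect to be the main obstacle is the per-recursion decrease of $s_r^{\max}$: one has to keep in mind that $R$ may simultaneously hold reference nodes at many different scales, that the children produced on line~\ref{alg:line:ct-dual-ref-children} live in the coalesced explicit tree (so a child's scale can be far below its parent's), and that the pruning on line~\ref{alg:line:ct-dual-ref-score} cannot reintroduce a node at scale $s_r^{\max}$ or above. Everything after that is bookkeeping, apart from pinning down the exact additive constant in the logarithmic term, which depends only on how tightly one reads the two top-scale inequalities.
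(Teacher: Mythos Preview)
Your proof is correct and follows essentially the same line as the paper's: both argue that the initial reference recursions end once $s_r^{\max}$ drops to $s_q^T$, and bound the number of such drops by (i) the scale gap $s_r^T - s_q^T$, translated via the defining inequalities into the logarithmic term, and (ii) the $O(N)$ count of explicit levels in $\mathscr{T}_r$. Your version is in fact more explicit than the paper's about why $s_r^{\max}$ strictly decreases at each reference recursion and why the successive values of $s_r^{\max}$ lie among the $O(N)$ occupied scales, and you correctly flag that the additive constant in the logarithmic term is only determined up to $O(1)$.
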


\begin{proof}
The first query recursion happens once $s_q \ge s_r^{\max}$.  The number of
reference recursions before the first query recursion is then bounded as the
number of levels in the reference tree between $s_r^T$ and $s_q^T$ that have at
least one explicit node.  Because there are $O(N)$ nodes in the reference tree,
the number of levels cannot be greater than $O(N)$ and thus the result holds.

The second bound holds by applying the definitions of $s_r^T$ and $s_q^T$ to the
expression $s_r^T - s_q^T - 1$:

\begin{eqnarray}
s_r^T - s_q^T - 1 &\le& \lceil \log_2(\eta_r) \rceil - (\lceil \log_2(\eta_q)
\rceil - 1) - 1 \\
&\le& \log_2(\eta_r) + 1 - \log_2(\eta_q)
\end{eqnarray}

\noindent which gives the statement of the lemma after applying logarithmic
identities.
\end{proof}

Note that the $O(N)$ bound may be somewhat loose, but it suffices for our later
purposes.  Now let us consider the other case:

\begin{lemma}
\label{lem:extcase3}
For a dual-tree algorithm with $S_q \sim S_r \sim O(N)$ using cover trees and
the traversal given in Algorithm \ref{alg:cover-tree-dual}, the number of extra
reference recursions that happen after the last query recursion is bounded by

\begin{equation}
\theta = \max\left\{\min\left[O(N \log_2(\delta_q / \delta_r)),
O(N^2)\right], \; 0\right\}.
\end{equation}
\end{lemma}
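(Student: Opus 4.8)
The plan is to mirror the proof of Lemma~\ref{lem:extcase1}, but working at the bottom of the two trees rather than the top. The first observation is that once a query recursion produces a leaf query node $\mathscr{N}_q$ (which has scale $-\infty$), the conditional in line~\ref{alg:line:ct-dual-ref-recursion-start} of Algorithm~\ref{alg:cover-tree-dual} forces every subsequent recursion carrying that query node to be a reference recursion. These post-query reference recursions therefore form a single chain hanging below each leaf of the query tree, and since the cover tree on $S_q$ has $O(N)$ nodes it has $O(N)$ leaves; it thus suffices to bound the length of one such chain and multiply by $O(N)$.

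To bound the length of a chain I would first argue that each reference recursion in it strictly decreases $s_r^{\max}$: the nodes of $R$ at scale $s_r^{\max}$ are replaced in line~\ref{alg:line:ct-dual-ref-children} by their children, all of strictly smaller scale, while (by the separation property established for reference sets immediately after Algorithm~\ref{alg:cover-tree-dual}) every remaining node of $R$ already has scale strictly below $s_r^{\max}$. Since $s_r^{\max}$ is always the scale of an explicit reference node, the chain runs through a strictly decreasing sequence of explicit scales and ends once $R$ contains only leaves; its length is therefore at most the number of distinct explicit reference scales between its initial $s_r^{\max}$ and $s_r^{\min}$, plus $O(1)$ for the final leaf-expanding steps. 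This already yields the unconditional bound $O(N)$ per chain, since the reference tree has only $O(N)$ explicit nodes, and summing over the $O(N)$ query leaves gives the $O(N^2)$ alternative.

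The sharper bound follows from the coordinated descent built into Algorithm~\ref{alg:cover-tree-dual}: a query recursion is performed only when $s_q \ge s_r^{\max}$, so at the last query recursion before a given query leaf we have $s_r^{\max} \le s_q$, and once the query tree has been descended to its minimum non-leaf scale this reads $s_r^{\max} \le s_q^{\min}$. A chain below such a leaf then need only descend from scale at most $s_q^{\min}$ to $s_r^{\min}$, a count of at most $s_q^{\min} - s_r^{\min} + 1 = \lceil \log_2 \delta_q \rceil - \lceil \log_2 \delta_r \rceil + 1 = O(\log_2(\delta_q / \delta_r))$ levels when $\delta_q > \delta_r$; when $\delta_q \le \delta_r$ the reference tree bottoms out no later than the query tree, so no such extra recursions occur, which accounts for the outer $\max(\cdot, 0)$. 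Multiplying the per-chain bound $\min\{O(\log_2(\delta_q/\delta_r)), O(N)\}$ by the $O(N)$ chains and keeping the nonnegative part yields exactly $\theta$.

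I expect the main obstacle to be making the estimate $s_r^{\max} \le s_q^{\min}$ uniform over all query leaves. It is clean for leaves that are reached only after the query tree has been descended all the way to scale $s_q^{\min}$, but a leaf hanging high in an imbalanced query tree can be reached while $s_r^{\max}$ is still close to $s_r^T$, producing a longer reference chain. I would absorb this excess either into the unconditional $O(N)$ per-chain bound --- which is exactly why the $O(N^2)$ term appears in the statement --- or against the tree imbalance $i_t(\mathscr{T}_q)$ of Definition~\ref{def:imbalance}, using the width bound to limit how many leaves can hang directly off any one node; carrying out this accounting carefully is the delicate part of the proof.
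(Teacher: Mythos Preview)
Your core argument is essentially the paper's: count at most $s_q^{\min} - s_r^{\min} = O(\log_2(\delta_q/\delta_r))$ reference levels per query leaf, or trivially $O(N)$ per leaf, and multiply by the $O(N)$ leaves. The paper's proof is only a few lines and uses exactly this accounting.

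The obstacle you flag at the end, however, dissolves once you look at how the lemma is invoked in Theorem~\ref{thm:ct-runtime}. There the reference recursions are partitioned into three cases by the value of $s_r^{\max}$, and Lemma~\ref{lem:extcase3} is applied only to case~(3): recursions with $s_r^{\max}$ strictly below $s_q^{\min}$. For any such recursion the chain you are measuring starts no higher than $s_q^{\min}$ by definition, regardless of how high the query leaf hangs, so the estimate $s_r^{\max} \le s_q^{\min}$ is automatic and uniform. The reference recursions below a high-hanging leaf that still have $s_r^{\max} \ge s_q^{\min}$ are not counted by this lemma at all; they belong to case~(2) and are charged to $i_t(\mathscr{T}_q)$ there, precisely via the leaf clause $\max(s_p - s_{\min} - 1, 0)$ in Definition~\ref{def:imbalance}. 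So your instinct to absorb that excess into the imbalance is correct, but it is already done outside this lemma, and you need not carry out that accounting here.
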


\begin{proof}
Our goal here is to count the number of reference recursions after the final
query recursion at level $s_q^{\min}$; the first of these reference recursions
is at scale $s_r^{\max} = s_q^{\min}$.  Because query nodes are not pruned in
this traversal, each reference recursion we are counting will be duplicated over
the whole set of $O(N)$ query nodes.  The first part of the bound follows by
observing that
$s_q^{\min} - s_r^{\min} \le \lceil \log_2(\delta_q) \rceil -
\lceil \log_2(\delta_r) \rceil - 1 \le \log_2(\delta_q / \delta_r)$.

The second part follows by simply observing that there are $O(N)$ reference
nodes.
\end{proof}

These two previous lemmas allow us a better understanding of what happens as the
reference set and query set become different.  Lemma \ref{lem:extcase1} shows
that the number of extra recursions caused by a reference set with larger
pairwise distances than the query set ($\eta_r$ larger than $\eta_q$) is modest;
on the other hand, Lemma \ref{lem:extcase3} shows that for each extra level in
the reference tree below $s_q^{\min}$, $O(N)$ extra recursions are required.
Using these lemmas and this intuition, we will prove general runtime bounds for
the cover tree traversal.

\begin{thm}
\label{thm:ct-runtime}
%For any dual-tree algorithm using cover trees and the standard cover tree
%dual-tree pruning traversal (Algorithm \ref{alg:cover-tree-dual}) with reference
%set $S_r$ that has expansion constant $c_r$ and query set $S_q$, the runtime is
%bounded above by $O(c_r^4 | R^* | \nu \chi \psi N)$, where $ | R^* | $ is
%the maximum size of the reference set $R$ (line
%\ref{alg:line:ct-dual-input}) at any point in the dual-tree pruning traversal,
%$\chi$ is the longest possible running time of \texttt{BaseCase()}, and $\psi$
%is the longest possible running time of \texttt{Score()}.
Given a reference set $S_r$ of size $N$ with an expansion constant $c_r$ and a
set of queries $S_q$ of size $O(N)$, a standard cover tree based dual-tree
algorithm (Algorithm \ref{alg:cover-tree-dual}) takes

\begin{equation}
O\left(c_r^4 | R^* | \chi \psi (N + i_t(\mathscr{T}_q) + \theta)\right)
\end{equation}

\noindent time, where $ | R^* | $ is the maximum size of the reference set $R$
(line \ref{alg:line:ct-dual-input}) during the dual-tree recursion, $\chi$ is
the maximum possible runtime of \texttt{BaseCase()}, $\psi$ is the maximum
possible runtime of \texttt{Score()}, and $\theta$ is defined as in Lemma
\ref{lem:extcase3}.
\end{thm}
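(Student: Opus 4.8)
The plan is to bound the total work as (the number of recursive calls made by Algorithm~\ref{alg:cover-tree-dual}) times (the worst-case cost of a single call), treating query recursions and reference recursions separately. First I would bound the cost of one recursion. Every recursion makes at most $|R^*|$ calls to \texttt{BaseCase()}, costing $O(|R^*|\chi)$. A reference recursion additionally scores the set $R_{r-1}$, whose size is at most $(c_r^4+1)|R| = O(c_r^4|R^*|)$ by the width bound (no cover tree node has more than $c_r^4$ children), so one reference recursion costs $O(c_r^4|R^*|(\chi+\psi)) = O(c_r^4|R^*|\chi\psi)$. A query recursion at $\mathscr{N}_q$ instead scores $R$ once for each child of $\mathscr{N}_q$; rather than bound the number of children via the width bound (which would introduce a spurious expansion-constant factor for the query tree and is precisely why the theorem involves only $c_r$), I would charge the $|R| \le |R^*|$ \texttt{Score()} calls of each iteration to the corresponding child node, so that the total \texttt{Score()} work over all query recursions is $\sum_{\mathscr{N}_q} |\mathscr{C}(\mathscr{N}_q)| \, |R^*| = O(N|R^*|)$ since a cover tree has $O(N)$ nodes (the space bound). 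This query-recursion contribution is dominated by the reference-recursion cost and may be dropped at the end.

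The crux is counting the recursions. Because each invocation either descends the reference tree with $\mathscr{N}_q$ held fixed, or descends the query tree to the children of $\mathscr{N}_q$, the traversal reaches each query node exactly once and performs at most one query recursion there, giving $O(N)$ query recursions. For reference recursions I would group them by the query node $\mathscr{N}_q$ at which they occur. The traversal arrives at a non-root $\mathscr{N}_q$ via a query recursion from its parent $\mathscr{N}_p$, which forces the value of $s_r^{\max}$ on arrival to be at most $s_p$; each reference recursion at $\mathscr{N}_q$ strictly decreases $s_r^{\max}$, and when $\mathscr{N}_q$ is not a leaf these stop once $s_r^{\max} \le s_q$, so there are at most $s_p - s_q = i_n(\mathscr{N}_q) + 1$ of them, which sums over all non-root non-leaf query nodes to $i_t(\mathscr{T}_q) + O(N)$. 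When $\mathscr{N}_q$ is a leaf ($s_q = -\infty$) its reference recursions run through the explicit reference levels from scale at most $s_p$ downward; I would split this range at $s_q^{\min}$, bounding the part at or above $s_q^{\min}$ by $s_p - s_q^{\min} + 1 \le i_n(\mathscr{N}_q) + 2$ per leaf (which again sums to $i_t(\mathscr{T}_q) + O(N)$) and the part strictly below $s_q^{\min}$ by $\theta$ via Lemma~\ref{lem:extcase3}. Finally, the reference recursions performed at the query root all precede the first query recursion and number $O(N)$ by Lemma~\ref{lem:extcase1}. Adding these, the total number of reference recursions is $O(N + i_t(\mathscr{T}_q) + \theta)$, and multiplying by the $O(c_r^4|R^*|\chi\psi)$ cost of each, then adding the dominated $O(N|R^*|\psi)$ from query recursions, yields the claimed bound.

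I expect the main obstacle to be the recursion-count bookkeeping, particularly for leaf query nodes: cleanly attributing the reference recursions at or above $s_q^{\min}$ to the cover-tree-imbalance term via the node imbalance $i_n$ while folding those strictly below $s_q^{\min}$ into $\theta$, and verifying that the ``before the first query recursion'' and ``after the last query recursion'' regimes of Lemmas~\ref{lem:extcase1} and~\ref{lem:extcase3} exactly account for the reference recursions not attached to an internal query node. By contrast, the per-call cost estimates are routine consequences of the width and space bounds, once one notices the charging trick that keeps the query tree's expansion constant out of the query-recursion term.
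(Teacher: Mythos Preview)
Your proposal is correct and follows essentially the same approach as the paper: split into reference and query recursions, bound per-recursion cost via the width bound, and count reference recursions by partitioning into the three regimes handled by Lemma~\ref{lem:extcase1}, the imbalance sum $\sum_{\mathscr{N}_q} (i_n(\mathscr{N}_q)+1)$, and Lemma~\ref{lem:extcase3}. Your charging of the \texttt{Score()} work in a query recursion to the child node (yielding $O(N|R^*|\psi)$ rather than the paper's looser $O(c_r^4|R^*|\psi N)$) and your explicit split of the leaf-node reference recursions at $s_q^{\min}$ are slightly cleaner than the paper's treatment, but the decomposition and the key estimates are the same.
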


\begin{proof}
First, split the algorithm into two parts: reference recursions (lines
\ref{alg:line:ct-dual-ref-recursion-start}--\ref{alg:line:ct-dual-ref-recursion-end})
and query recursions (lines
\ref{alg:line:ct-dual-query-recursion-start}--\ref{alg:line:ct-dual-query-recursion-end}).
The runtime of the algorithm is bounded as the runtime of a reference recursion
times the total number of reference recursions plus the total runtime of all
query recursions.

Consider a reference recursion (lines
\ref{alg:line:ct-dual-ref-recursion-start}--\ref{alg:line:ct-dual-ref-recursion-end}).
Define $R^*$ to be the largest set $R$ for any scale
$s_r^{\max}$ and any query node $\mathscr{N}_q$ during the course of the
algorithm; then, it is true that $| R | \le | R^* |$.  The work done in the base
case loop from lines
\ref{alg:line:ct-dual-base-case-start}--\ref{alg:line:ct-dual-base-case-end} is
thus $O(\chi | R |) \le O(\chi | R^*|)$.
Then, lines \ref{alg:line:ct-dual-ref-children} and
\ref{alg:line:ct-dual-ref-score} take $O(c_r^4 \psi | R |) \le O(c_r^4 \psi |
R^* |)$ time, because each reference node has up to $c_r^4$ children.  So, one
full reference recursion takes $O(c_r^4 \psi \chi | R^* |)$ time.

Now, note that there are $O(N)$ nodes in $\mathscr{T}_q$.  Thus, line
\ref{alg:line:ct-dual-query-recursion} is visited $O(N)$ times.  The amount of
work in line \ref{alg:line:ct-dual-query-pruning}, like in the reference
recursion, is bounded as $O(c_r^4 \psi | R^* |)$. Therefore, the total runtime
of all query recursions is $O(c_r^4 \psi | R^* | N)$.

Lastly, we must bound the total number of reference recursions.  Reference
recursions happen in three cases: \textit{(1)} $s_r^{\max}$ is greater than the
scale of the root of the query tree (no query recursions have happened yet);
\textit{(2)} $s_r^{\max}$ is less than or equal to the scale of the root of the
query tree, but is greater than the minimum scale of the query tree that is not
$-\infty$; \textit{(3)} $s_r^{\max}$ is less than the minimum scale of the query
tree that is not $-\infty$.

First, consider case \textit{(1)}.  Lemma \ref{lem:extcase1} shows that the
number of reference recursions of this type is bounded by $O(N)$.  Although
there is also a bound that depends on the sizes of the datasets, we only aim to
show a linear runtime bound, so the $O(N)$ bound is sufficient here.

Next, consider case \textit{(2)}.  In this situation, each query recursion
implies at least one reference recursion before another query recursion.  For
some query node $\mathscr{N}_q$, the exact number of reference recursions before
the children of $\mathscr{N}_q$ are recursed into is bounded above by
$i_n(\mathscr{N}_q) + 1$: if $\mathscr{N}_q$ has imbalance $0$, then it is
exactly one level below its parent, and thus there is only one reference
recursion.  On the other hand, if $\mathscr{N}_q$ is many levels below its
parent, then it is possible that a reference recursion may occur for each level
in between; this is a maximum of $i_n(\mathscr{N}_q) + 1$.

Because each query node in $\mathscr{T}_q$ is recursed into once, the total
number of reference recursions before each query recursion is

\begin{equation}
\sum_{\mathscr{N}_q \in \mathscr{T}_q} i_n(\mathscr{N}_q) + 1 = i_t(\mathscr{T}_q) +
O(N)
\end{equation}

\noindent since there are $O(N)$ nodes in the query tree.

Lastly, for case \textit{(3)}, we may refer to Lemma \ref{lem:extcase3}, giving
a bound of $\theta$ reference recursions in this case.

We may now combine these results for the runtime of a query recursions with the
total number of reference recursions in order to give the result of the theorem:

\begin{equation}
O\left(c_r^4 |R^*| \psi \chi \left(N + i_t(\mathscr{T}_q) + \theta\right)\right)
+ O\left(c_r^4 |R^*| \psi N\right) \sim O\left(c_r^4 |R^*| \psi
\chi \left(N + i_t(\mathscr{T}_q) + \theta\right)\right).
\end{equation}
\end{proof}

When we consider the monochromatic case (where $S_q = S_r$), the results
trivially simplify.

\begin{cor}
\label{cor:ct-runtime-mono}
Given the situation of Theorem \ref{thm:ct-runtime} but with $S_q = S_r = S$ so
that $c_q = c_r = c$ and $\mathscr{T}_q = \mathscr{T}_r = \mathscr{T}$, a
dual-tree algorithm using the standard cover tree traversal (Algorithm
\ref{alg:cover-tree-dual}) takes

\begin{equation}
O\left(c^4 |R^*| \chi \psi \left(N + i_t(\mathscr{T})\right)\right)
\end{equation}

\noindent time, where $ | R^* | $ is the maximum size of the reference set $R$
(line \ref{alg:line:ct-dual-input}) during the dual-tree recursion, $\chi$ is
the maximum possible runtime of \texttt{BaseCase()}, and $\psi$ is the maximum
possible runtime of \texttt{Score()}.
\end{cor}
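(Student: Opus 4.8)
The plan is to obtain this corollary as a direct specialization of Theorem~\ref{thm:ct-runtime}, since in the monochromatic case every dataset-dependent quantity appearing in that bound collapses. First I would note that $S_q = S_r = S$ forces $\eta_q = \eta_r$ and $\delta_q = \delta_r$, and---because $\mathscr{T}_q = \mathscr{T}_r = \mathscr{T}$---the top scales and minimum scales of the two trees literally coincide: $s_q^T = s_r^T$ and $s_q^{\min} = s_r^{\min}$. Combined with the hypothesis $c_q = c_r = c$, this already replaces the $c_r^4$ factor by $c^4$ and $i_t(\mathscr{T}_q)$ by $i_t(\mathscr{T})$ in the statement of Theorem~\ref{thm:ct-runtime}, leaving only the term $\theta$ to handle.

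Next I would show $\theta = 0$. By Lemma~\ref{lem:extcase3},
\begin{equation}
\theta = \max\left\{\min\left[O(N\log_2(\delta_q/\delta_r)),\, O(N^2)\right],\, 0\right\},
\end{equation}
and since $\delta_q = \delta_r$ we have $\log_2(\delta_q/\delta_r) = 0$, so the inner minimum is $0$ and hence $\theta = 0$. This matches the intuition: case~(3) in the proof of Theorem~\ref{thm:ct-runtime}---reference recursions below the query tree's minimum scale---cannot occur when the reference tree bottoms out at exactly the same scale as the query tree. For completeness I would also observe that the Lemma~\ref{lem:extcase1} bound $\min(O(N),\, \log_2(\eta_r/\eta_q) - 1)$ on extra recursions before the first query recursion likewise degenerates (it is already absorbed into the $O(N)$ term), so case~(1) contributes nothing new either.

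Substituting $\theta = 0$, $c_r = c$, and $\mathscr{T}_q = \mathscr{T}$ into the bound of Theorem~\ref{thm:ct-runtime} then gives
\begin{equation}
O\left(c^4 |R^*| \chi \psi \left(N + i_t(\mathscr{T})\right)\right),
\end{equation}
which is precisely the claimed runtime. I do not anticipate any real obstacle: the corollary is an instantiation rather than a new argument, and the only step needing explicit justification is the vanishing of $\theta$, which follows from a one-line evaluation of the formula in Lemma~\ref{lem:extcase3}.
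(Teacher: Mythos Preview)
Your proposal is correct and matches the paper's approach: the paper itself does not give an explicit proof, merely stating that in the monochromatic case ``the results trivially simplify,'' and your argument is exactly the trivial specialization being alluded to. The one substantive observation---that $\theta = 0$ because $\delta_q = \delta_r$---is precisely what makes the simplification work, and you have justified it correctly from Lemma~\ref{lem:extcase3}.
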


An intuitive understanding of these bounds is best achieved by first considering
the monochromatic case (this case arises, for instance, in all-nearest-neighbor
search).  The linear dependence on $N$ arises from the fact that all query nodes
must be visited.  The dependence on the reference tree, however, is encapsulated
by the term $c^4 |R^*|$, with $|R^*|$ being the maximum size of the reference
set $R$; this value must be derived for each specific problem.  The bad
performance of poorly-behaved datasets with large $c$ (or, in the worst case, $c
\sim N$) is then captured in both of those terms.  Poorly-behaved datasets may
also have a high cover tree imbalance $i_t(\mathscr{T})$; the linear dependence of
runtime on imbalance is thus sensible for well-behaved datasets.

The bichromatic case ($S_q \ne S_r$) is a slightly more complex result which
deserves a bit more attention.  The intuition for all terms except $\theta$
remain virtually the same.

The term $\theta$ captures the effect of query and reference datasets with
different widths, and has one unfortunate corner case: when $\delta_q > \eta_r$,
then the query tree must be entirely descended before any reference recursion.
This results in a bound of the form $O(N \log (\eta_r / \delta_r))$, or
$O(N^2)$ (See Lemma \ref{lem:extcase3}).  This is because the reference tree
must be descended individually for each query point.

The quantity $|R^*|$ bounds the amount of work that needs to be done for each
recursion. In the worst case, $|R^*|$ can be $N$. However,
dual-tree algorithms rely on branch-and-bound techniques to prune away
work (Lines \ref{alg:line:ct-dual-ref-score} and
\ref{alg:line:ct-dual-query-pruning} in Algorithm \ref{alg:cover-tree-dual}). A
small value of $|R^*|$ will imply that the algorithm is extremely successful in
pruning away work.  An (upper) bound on $|R^*|$ (and the algorithm's
success in pruning work) will depend on the problem and the data.  As we will
show, bounding $|R^*|$ is often possible. % In addition, we will also show that
For many dual-tree algorithms, $\chi \sim \psi \sim O(1)$; often, cached
sufficient statistics \citep{moore2000anchors} can enable $O(1)$ runtime
implementations of \texttt{BaseCase()} and \texttt{Score()}.

These results hold for any dual-tree algorithm regardless of the problem. Hence,
%for any pairwise statistical problem, the runtime of the dual-tree algorithm (of
the runtime of any dual-tree algorithm
%that problem)
would be at least $O(N)$ using our bound, which matches the intuition that
answering $O(N)$ queries would take at least $O(N)$ time. For a particular
problem and data, if $c_r$, $|R^*|$, $\chi$, $\psi$ are bounded by constants
independent of $N$ and $\theta$ is no more than linear in $N$ (for large enough
$N$), then the dual-tree algorithm for that problem has a runtime linear in $N$.
Our theoretical result separates out the problem-dependent and the
problem-independent elements of the runtime bound, which allows us to simply
plug in the problem-dependent bounds in order to get runtime bounds for any
dual-tree algorithm without requiring an analysis from scratch.

Our results are similar to that of \citet{ram2009}, but those results depend on
a quantity called the {\it constant of bichromaticity}, denoted $\kappa$, which
has unclear relation to cover tree imbalance.  The dependence on $\kappa$ is
given as $c_q^{4 \kappa}$, which is not a good bound, especially because
$\kappa$ may be much greater than $1$ in the bichromatic case (where $S_q \ne
S_r$).

The more recent results of \citet{curtin2014dual} are more related to these
results, but they depend on the {\it inverse constant of bichromaticity} $\nu$
which suffers from the same problem as $\kappa$.  Although the dependence on
$\nu$ is linear (that is, $O(\nu N)$), bounding $\nu$ is difficult and it is not
true that $\nu = 1$ in the monochromatic case.

$\nu$ corresponds to the maximum number of reference recursions between a single
query recursion, and $\kappa$ corresponds to the maximum number of query
recursions between a single reference recursion.  The respective proofs that use
these constants then apply them as a worst-case measure for the whole algorithm:
when using $\kappa$, \citet{ram2009} assume that {\it every} reference recursion
may be followed by $\kappa$ query recursions; similarly, \citet{curtin2014dual}
assume that {\it every} query recursion may be followed by $\nu$ reference
recursions.  Here, we have simply used $i_t(\mathscr{T}_q)$ and $\theta$ as an
exact summation of the total extra reference recursions, which gives us a much
tighter bound than $\nu$ or $\kappa$ on the running time of the whole algorithm.

Further, both $\nu$ and $\kappa$ are difficult to empirically calculate and
require an entire run of the dual-tree algorithm.  On the other hand, bounding
$i_t(\mathscr{T}_q)$ (and $\theta$) can be done in one pass of the tree
(assuming the tree is already built).  Thus, not only is our bound tighter when
the cover tree imbalance is sublinear in $N$, it more closely reflects the
actual behavior of dual-tree algorithms, and the constants which it depends upon
are straightforward to calculate.

In the following sections, we will apply our results to specific problems and
show the utility of our bound in simplifying runtime proofs for dual-tree
algorithms.

\section{Nearest neighbor search}
\label{sec:nns}

%\begin{minipage}{0.4\textwidth}
\begin{algorithm}[tb]
%\begin{subfigure}[b]{0.48\textwidth}
%  \SetAlgoLined
\begin{algorithmic}
    \STATE {\bfseries Input:} query point $p_q$, reference point $p_r$, list of
candidate neighbors $N$ and distances $D$
    \STATE {\bfseries Output:} distance $d$ between $p_q$ and $p_r$

    \medskip
%    \STATE $d \gets \| p_q - p_r \|$
%    \medskip

    \IF{$d(p_q, p_r) < D[p_q]$ \AND \texttt{BaseCase($p_q$, $p_r$)} not yet called}
    \STATE  $D[p_q] \gets d(p_q, p_r)$, and $N[p_q] \gets p_r$
    \ENDIF

    \RETURN $d(p_q, p_r)$
  \end{algorithmic}

  \caption{Nearest neighbor search \texttt{BaseCase()}}
  \label{alg:nn_base_case}
%\end{subfigure}
\end{algorithm}
%\end{minipage}

%\begin{subfigure}[b]{0.48\textwidth}
%\begin{minipage}{0.4\textwidth}
\begin{algorithm}[tb]
  \begin{algorithmic}
    \STATE {\bfseries Input:} query node $\mathscr{N}_q$, reference node
$\mathscr{N}_r$
    \STATE {\bfseries Output:} a score for the node combination $(\mathscr{N}_q,
\mathscr{N}_r)$, or $\infty$ if the combination should be pruned

    \medskip

    \IF{$d_{\min}(\mathscr{N}_q, \mathscr{N}_r) < B(\mathscr{N}_q)$}
      \RETURN $d_{\min}(\mathscr{N}_q, \mathscr{N}_r)$
    \ENDIF

%    \medskip
    \RETURN $\infty$
  \end{algorithmic}

  \caption{Nearest neighbor search \texttt{Score()}}
  \label{alg:nn_score}
\end{algorithm}
%\end{minipage}
%\end{subfigure}
%\end{figure}

The standard task of nearest neighbor search can be simply described: given a
query set $S_q$ and a reference set $S_r$, for each query point $p_q \in S_q$,
find the nearest neighbor $p_r$ in the reference set $S_r$.  The task is
well-studied and well-known, and there exist numerous approaches for both exact
and approximate nearest neighbor search, including the cover tree nearest
neighbor search algorithm due to \citet{langford2006}.  We will consider that
algorithm, but in a tree-independent sense as given by \citet{curtin2013tree};
this means that to describe the algorithm, we require only a \texttt{BaseCase()}
and \texttt{Score()} function; these are given in Algorithms
\ref{alg:nn_base_case} and \ref{alg:nn_score}, respectively.  The
point-to-point \texttt{BaseCase()} function compares a query point $p_q$ and a
reference point $p_r$, updating the list of candidate neighbors for $p_q$ if
necessary.

The node-to-node \texttt{Score()} function determines if the entire
subtree of nodes under the reference node $\mathscr{N}_r$ can improve the
candidate neighbors for all descendant points of the query node $\mathscr{N}_q$;
if not, the node combination is pruned.  The \texttt{Score()} function depends
on the function $d_{\min}(\cdot, \cdot)$, which represents the minimum possible
distance between any two descendants of two nodes.  Its definition for cover
tree nodes is

\begin{equation}
d_{\min}(\mathscr{N}_q, \mathscr{N}_r) = d(p_q, p_r) - 2^{s_q + 1} - 2^{s_r +
1}.
\end{equation}

Given a type of tree and traversal, these two functions store the current
nearest neighbor candidates in the array $N$ and their distances in the array
$D$. \citep[See][for a more complete
discussion of how this algorithm works and a proof of
correctness.]{curtin2013tree}  The
\texttt{Score()} function depends on a bound function $B(\mathscr{N}_q)$ which
represents the maximum distance that could possibly improve a nearest neighbor
candidate for any descendant point of the query node $\mathscr{N}_q$.  The
standard bound function $B(\mathscr{N}_q)$ used for cover trees is adapted from
\citep{langford2006}:

\begin{equation}
B(\mathscr{N}_q) := D[p_q] + 2^{s_q + 1}
\end{equation}

%\begin{figure}

%\begin{equation}
%\begin{split}
%B(\mathscr{N}_q) = \min &\left\{ \max \big\{ \max_{p \in \mathscr{P}_q}
%D[p], \max_{\mathscr{N}_c \in \mathscr{C}_q} B(\mathscr{N}_c) \big\},
%\right. \\
% & \left. \min_{p \in \mathscr{P}_q} \big(D[p] + \rho(\mathscr{N}_q) +
%\lambda(\mathscr{N}_q)\big)
%,\right. \\
% & \left. \min_{\mathscr{N}_c \in \mathscr{C}_q} \Big(B(\mathscr{N}_c) + 2
%\big(\lambda(\mathscr{N}_q) - \lambda(\mathscr{N}_c)\big)\Big) \right. \\
% & B(\operatorname{Par}(\mathscr{N}_q)) \bigg\} . \\
%\end{split}
%\end{equation}

%Although that is a particularly complex definition, what we'll be talking about
%is only cover trees, and we can relax the bound a bit:

%\begin{eqnarray}
%B(\mathscr{N}_q) &\le& \min_{p \in \mathscr{P}_q} \big(D[p] +
%\rho(\mathscr{N}_q) + \lambda(\mathscr{N}_q) \big) \\
% &=& \min_{p \in \mathscr{P}_q} \big(D[p] + \lambda(\mathscr{N}_q) \big) \\
% &=& D[p_q] + \lambda(\mathscr{N}_q) \\
% &=& D[p_q] + 2^{s_q + 1} \\
%&=:& B_c(\mathscr{N}_q).
%\end{eqnarray}

In this formulation, the query node $\mathscr{N}_q$ holds the the query point
$p_q$, the quantity $D[p_q]$ is the current nearest neighbor candidate distance
for the query point $p_q$, and $2^{s_q + 1}$ corresponds to the furthest
descendant distance of $\mathscr{N}_q$.  For notational convenience in the
following proof, take $c_{qr} = \max((\max_{p_q \in S_q} c'_r), c_r)$, where
$c'_r$ is the expansion constant of the set $S_r \cup \{ p_q \}$.

%This follows because $\rho(\mathscr{N}_q) = 0$ for all cover tree
%nodes because each cover tree node only holds one point and that point is the
%centroid of the node.  Also, for a cover tree node $\mathscr{N}_i$ with scale
%$s_i$,

%\begin{equation}
%\label{eqn:ct-fdd-bound}
%\lambda(\mathscr{N}_i) \le 2^{s_i + 1}.
%\end{equation}

\begin{thm}
Using cover trees, the standard cover tree pruning dual-tree traversal, and the
nearest neighbor search \texttt{BaseCase()} and \texttt{Score()} as given in
Algorithms \ref{alg:nn_base_case} and \ref{alg:nn_score}, respectively, and also
given a reference set $S_r$ with expansion constant $c_r$, and a query set $S_q$,
the running time of the algorithm is bounded by $O(c_r^4 c_{qr}^5 (N +
i_t(\mathscr{T}_q) + \theta))$ with $i_t(\mathscr{T}_q$ and $\theta$ defined as
in Definition \ref{def:imbalance} and Lemma \ref{lem:extcase3}, respectively.
\label{thm:nns}
\end{thm}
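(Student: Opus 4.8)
The plan is to derive the result almost immediately from Theorem~\ref{thm:ct-runtime} once the three problem-specific quantities there---$\chi$, $\psi$, and $|R^*|$---are bounded for nearest neighbor search. The bounds on $\chi$ and $\psi$ are trivial: \texttt{BaseCase()} (Algorithm~\ref{alg:nn_base_case}) performs one distance evaluation, a comparison against $D[p_q]$, and possibly one array update, so $\chi = O(1)$; \texttt{Score()} (Algorithm~\ref{alg:nn_score}) evaluates $d_{\min}(\mathscr{N}_q,\mathscr{N}_r) = d(p_q,p_r) - 2^{s_q+1} - 2^{s_r+1}$ and compares it to $B(\mathscr{N}_q) = D[p_q] + 2^{s_q+1}$, so $\psi = O(1)$. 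Hence Theorem~\ref{thm:ct-runtime} already yields a running time of $O(c_r^4 |R^*| (N + i_t(\mathscr{T}_q) + \theta))$, and it remains only to show $|R^*| = O(c_{qr}^5)$.

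To bound $|R^*|$, fix any recursion (line~\ref{alg:line:ct-dual-input}) with query node $\mathscr{N}_q$ and reference set $R$ of top scale $s_r^{\max}$. By the separation property established for $R$ just before Theorem~\ref{thm:ct-runtime}, the points held by nodes of $R$ form a subset $C \subseteq C_{s_r^{\max}}$ that is pairwise separated by more than $\delta := 2^{s_r^{\max}}$. The goal is to confine $C$ to a ball around $p_q$ of radius $\rho\delta$ with $\rho \le 8$, whereupon Lemma~\ref{lem:packing} (applied with the expansion constant $c'_r$ of $S_r \cup \{p_q\}$, which is permitted since $p_q$ need not lie in $S_r$) gives $|R| = |C| \le (c'_r)^{2 + \lceil\log_2\rho\rceil} \le (c'_r)^5 \le c_{qr}^5$; taking the maximum over all recursion states and query points then gives $|R^*| \le c_{qr}^5$.

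The containment is obtained in two steps. First, a node $\mathscr{N}_r$ is retained only if \texttt{Score()} does not return $\infty$, so every $\mathscr{N}_r \in R$ satisfies $d(p_q,p_r) < D[p_q] + 2^{s_q+2} + 2^{s_r+1} \le D[p_q] + 2^{s_q+2} + 2^{s_r^{\max}+1}$. Second, we control $D[p_q]$ via the key observation that the cover-tree ancestor chain of the true nearest neighbor $p^*$ of $p_q$ is never pruned: for the ancestor $a$ of $p^*$ at scale $s$ we have $d(p_q,a) \le d^* + 2^{s+1}$ with $d^* := d(p_q,p^*) \le D[p_q]$, so $d_{\min}(\mathscr{N}_q,\mathscr{N}_a) = d(p_q,a) - 2^{s_q+1} - 2^{s+1} \le d^* - 2^{s_q+1} < D[p_q] + 2^{s_q+1} = B(\mathscr{N}_q)$, i.e. $\mathscr{N}_a$ survives. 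Consequently, by the time a reference recursion reaches top scale $s_r^{\max}$, \texttt{BaseCase()} has been invoked on the ancestor of $p^*$ at scale $s_r^{\max}$ (up to coalescing of missing levels in the explicit representation), forcing $D[p_q] \le d^* + O(2^{s_r^{\max}})$. Combining with the retention inequality, and using that a reference recursion occurs only when $s_q < s_r^{\max}$ (so $2^{s_q+2}$ does not dominate $2^{s_r^{\max}+1}$ in the regime where $R$ grows), every retained reference point satisfies $d(p_q,p_r) < d^* + O(2^{s_r^{\max}})$. If $d^* = O(2^{s_r^{\max}})$ this is the desired radius-$O(\delta)$ ball and Lemma~\ref{lem:packing} applies; if $d^* \gg 2^{s_r^{\max}}$, then $C$ lies in a thin shell about $p_q$ of width $O(\delta)$ which, by the definition of the expansion constant of $S_r \cup \{p_q\}$ (so that $|B_{S_r\cup\{p_q\}}(p_q, d^*)|$ is small), again contains only a constant power of $c_{qr}$ many points. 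Either way $|R| = O(c_{qr}^5)$, and substituting into Theorem~\ref{thm:ct-runtime} gives the claimed $O(c_r^4 c_{qr}^5 (N + i_t(\mathscr{T}_q) + \theta))$ bound.

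I expect the second step above to be the main obstacle: rigorously pinning down $D[p_q]$, and hence the radius of the confining ball. One must carefully track which reference ancestors of $p^*$ have actually been processed by \texttt{BaseCase()} once the reference set has reached a given top scale---which is where the explicit-representation coalescing of missing levels must be handled---and the precise coupling between $s_q$ and $s_r^{\max}$ maintained by the traversal of Algorithm~\ref{alg:cover-tree-dual}, so that the query-side terms never dominate in the situations where $|R|$ is large; the large-$d^*$ (thin-shell) case also requires the augmented expansion constant rather than a direct appeal to Lemma~\ref{lem:packing}. The remaining ingredients---the separation property of $R$, the packing Lemma~\ref{lem:packing}, the $O(1)$ bounds on $\chi$ and $\psi$, and the reduction through Theorem~\ref{thm:ct-runtime}---are routine.
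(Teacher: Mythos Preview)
Your plan is essentially the paper's own: reduce via Theorem~\ref{thm:ct-runtime} using $\chi,\psi=O(1)$, then bound $|R^*|$ by controlling $D[p_q]$ through the unpruned ancestor of the true nearest neighbor and splitting into the two cases $d^*\le 2^{s_r^{\max}+1}$ and $d^*>2^{s_r^{\max}+1}$. The paper handles the first case with Lemma~\ref{lem:packing} (getting $c_r^5$) and the second by applying the expansion-constant inequality three times on $S_r\cup\{p_q\}$ from $d^*/2$ to $4d^*$ (getting $c_{qr}^3$, since the ball of radius $d^*/2$ about $p_q$ contains only $p_q$); your self-identified obstacles are exactly the points where care is needed.

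One correction: in the large-$d^*$ regime your ``thin shell of width $O(\delta)$'' picture is not what drives the bound. The \texttt{Score()} condition gives only an upper bound on $d(p_q,p_r)$, so the retained points sit in a full ball of radius roughly $4d^*$, not an annulus. The argument that works (and that your parenthetical hints at) is purely the doubling bound on $S_r\cup\{p_q\}$: $|B_{S_r\cup\{p_q\}}(p_q,4d^*)|\le c_{qr}^3\,|B_{S_r\cup\{p_q\}}(p_q,d^*/2)|=c_{qr}^3$, with no use of the $2^{s_r^{\max}}$-separation of $R^*$ in this case. Also, Lemma~\ref{lem:packing} already allows the center $p$ to lie outside $S$, so in the small-$d^*$ case you may invoke it directly with $c_r$ rather than passing to $S_r\cup\{p_q\}$.
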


\begin{proof}
The running time of \texttt{BaseCase()} and \texttt{Score()} are clearly $O(1)$.
Due to Theorem \ref{thm:ct-runtime}, we therefore know that the runtime of the
algorithm is bounded by $O(c_r^4 |R^*| (N + i_t(\mathscr{T}_q) + \theta))$.
Thus, the only thing that remains is to bound the maximum size of the reference
set, $|R^*|$.

Assume that when $R^*$ is encountered, the maximum reference scale is
$s_r^{\max}$ and the query node is $\mathscr{N}_q$.  Every node $\mathscr{N}_r
\in R^*$ satisfies the property enforced in line
\ref{alg:line:ct-dual-ref-score} that
%\begin{equation}
$d_{\min}(\mathscr{N}_q, \mathscr{N}_r) \le B(\mathscr{N}_q)$.
%\end{equation}
%Relaxing this slightly,
%\begin{equation}
%d_{\min}(\mathscr{N}_q, \mathscr{N}_r) \le B_c(\mathscr{N}_q).
%\end{equation}
Using the definition of $d_{\min}(\cdot, \cdot)$ and $B(\cdot)$, we
expand the equation.  Note that $p_q$ is the point held in $\mathscr{N}_q$ and
$p_r$ is the point held in $\mathscr{N}_r$.  Also, take $\hat{p}_r$ to be the
current nearest neighbor candidate for $p_q$; that is, $D[p_q] = d(p_q,
\hat{p}_r)$ and $N[p_q] = \hat{p}_r$.  Then,

\begin{eqnarray}
%d(p_q, p_r) - \lambda(\mathscr{N}_q) - \lambda(\mathscr{N}_r) &\le& D[p_q] +
%2^{s_q + 1} \\
%d(p_q, p_r) &\le& d(p_q, \hat{p}_r) + \lambda(\mathscr{N}_q) +
%\lambda(\mathscr{N}_r) + 2^{s_q + 1} \\
%d(p_q, p_r) - 2^{s_q + 1} - 2^{s_r + 1} &\le& D[p_q] + 2^{s_q + 1} \\
d_{\min}(\mathscr{N}_q, \mathscr{N}_r) &\le& B(\mathscr{N}_q) \\
d(p_q, p_r) &\le& d(p_q, \hat{p}_r) + 2^{s_q + 1} + 2^{s_r + 1} + 2^{s_q + 1}
\label{eqn:pr_dist} \\
 &\le& d(p_q, \hat{p}_r) + 2(2^{s_r^{\max} + 1})
\end{eqnarray}

\noindent where the last step follows because $s_q + 1 \le s_r^{\max}$ and $s_r
\le s_r^{\max}$.  Define the set of points $P$ as the points held in each node
in $R^*$ (that is, $P = \{ p_r \in \mathscr{P}(\mathscr{N}_r) : \mathscr{N}_r
\in R^* \}$).  Then, we can write

%\begin{equation}
% = \{ p_r : \mathscr{N}_r \in \mathscr{C}(\mathscr{N}_p), \; s_p = s_r^{\max} + 1 \}.
%\end{equation}
%
%  Stated differently, we have that each point $p_r$ for each node
%$\mathscr{N}_r \in R$ is such that

\begin{equation}
P \subseteq B_{S_r}(p_q, d(p_q, \hat{p}_r) + 2(2^{s_r^{\max} + 1})).
\end{equation}

%\noindent or that

%\begin{equation}
%R \subseteq B_{S_r}(p_q, d(p_q, \hat{p}_r) + 3(2^{s_r^{\max} + 1})).
%\end{equation}

Suppose that the true nearest neighbor is $p_r^*$ and $d(p_q, p_r^*) >
2^{s_r^{\max} + 1}$.  Then, $p_r^*$ must be held as a descendant point of some
node in $R^*$ which holds some point $\tilde{p}_r$.  Using the triangle
inequality,

%\begin{eqnarray}
\begin{equation}
d(p_q, \hat{p}_r) \le d(p_q, \tilde{p}_r)
 \le d(p_q, p_r^*) + d(\tilde{p}_r, p_r^*)
 \le d(p_q, p_r^*) + 2^{s_r^{\max} + 1}.
%\end{eqnarray}
\end{equation}

This gives that
%\begin{equation}
$P \subseteq B_{S_r \cup \{ p_q \}}(p_q, d(p_q, p_r^*) + 3(2^{s_r^{\max} +
1}))$.
% &\subseteq& B_{S_r \cup \{ p_q \}}(p_q, d(p_q, p_r^*) + 3(2^{s_r^{\max} + 1}))
%\end{equation}
The previous step is necessary: to apply the definition of the expansion
constant, the ball must be centered at a point in the set; now, the center
($p_q$) is part of the set.

\begin{eqnarray}
| B_{S_r \cup \{ p_q \}}(p_q, d(p_q, p_r^*) + 3(2^{s_r^{\max} + 1})) | &\le& |
B_{S_r \cup \{ p_q \}}(p_q, 4 d(p_q, p_r^*)) | \\
% &\le& | B_{S_{qr}}(p_q, 8 d(p_q, p_r^*)) | \\
% &\le& (c'_r)^3 | B_{S_r \cup \{ p_q \}}(p_q, d(p_q, p_r^*) / 2) | \\
 &\le& c_{qr}^3 | B_{S_r \cup \{ p_q \}}(p_q, d(p_q, p_r^*) / 2) |
\end{eqnarray}

\noindent which follows because the expansion constant of the set $S_r \cup \{
p_q \}$ is bounded above by $c_{qr}$.  Next, we know that $p_r^*$ is the closest
point to $p_q$ in $S_r \cup \{ p_q \}$; thus, there cannot exist a point $p'_r
\ne p_q \in S_r \cup \{ p_q \}$ such that
%\begin{equation}
$p'_r \in B_{S_{qr}}(p_q, d(p_q, p_r^*) / 2)$
%\end{equation}
because that would imply that $d(p_q, p'_r) < d(p_q, p_r^*)$, which is
a contradiction.  Thus, the only point in the ball is $p_q$, and we have that $|
B_{S_r \cup \{ p_q \}}(p_q, d(p_q, p_r^*) / 2) | = 1$, giving the result that
$|R| \le c_{qr}^3$ in this case.

The other case is when $d(p_q, p_r^*) \le 2^{s_r^{\max} + 1}$, which means that
$d(p_q, \hat{p}_r) \le 2^{s_r^{\max} + 2}$. %  We use a packing argument,
%similar to the proof of Lemma~4.1 in \cite{langford2006}.
Note that $P \in C_{s_r^{\max}}$, and therefore

\begin{eqnarray}
P &\subseteq& B_{S_r}(p_q, d(p_q, p_r^*) + 3(2^{s_r^{\max} + 1})) \cap
C_{s_r^{\max}} \\
 &\subseteq& B_{S_r}(p_q, 4(2^{s_r^{\max} + 1})) \cap C_{s_r^{\max}}.
\end{eqnarray}

Every point in $C_{s_r^{\max}}$ is separated by at least $2^{s_r^{\max}}$.
Using Lemma \ref{lem:packing} with $\delta = 2^{s_r^{\max}}$ and $\rho = 8$
yields that $|P| \le c_r^5$.  This gives the result, because $c_r^5 \le
c_{qr}^5$.
\end{proof}

%By assumption (and Equation \ref{eqn:pr_dist}), for any $p_r \in
%P$,
%\begin{equation}
%$d(p_q, p_r) \leq 2^{s_r^{\max} + 3}$.
%\end{equation}
%Therefore, for any such $p_r$,

%\begin{eqnarray}
%\begin{equation}
%B_{S_r} \left( p_q, d(p_q, p_r^*) + 3(2^{s_r^{\max} + 1}) \right) \subseteq
%B_{S_r} \left( p_q, 2^{s_r^{\max} + 3} \right)
% \subseteq B_{S_r} \left(p_r, 2^{s_r^{\max} + 4} \right).
% &\subseteq& B_{S_r} \left(p_r, 2^{s_r^{\max} + 4} \right).
%\end{eqnarray}
%\end{equation}

%Applying the definition of $c_r$, we have that
%\begin{equation}
%$| B_{S_r} (p_r, 2^{s_r^{\max} + 4} ) | \leq c_r^5 | B_{S_r} (p_r,
%2^{s_r^{\max}-1} ) |$.
%\label{eqn:packing1}
%\end{equation}

%Any two points corresponding to nodes in $R^*$ are separated by at least
%$2^{s_r^{\max}}$, so each ball
%can  contain only one node in $R^*$. Therefore, the total
%number of nodes in $R^*$ is bounded by the number of disjoint balls of size
%$2^{s_r^{\max} - 1}$ that can be packed into a ball of size $2^{s_r^{\max} +
%4}$.
%
%In the worst case, this packing is perfect, and we have that
%\begin{equation}
%|B_{S_r}(p_q, d(p_q, p_r^*) + 4(2^{s_r^{\max} + 1})) \cap C_{s_r^{\max}}| \;
%\leq \;  \frac{\left| B_{S_r} \left(p_r, 2^{s_r^{\max} + 4} \right) \right|}{
%\left| B_{S_r} \left(p_q, 2^{s_r^{\max} - 1} \right) \right|} \; \leq \; c_r^5
%\end{equation}

%\noindent which gives the result because $c_r^5 \le c_{qr}^5$.

In the monochromatic case where $S_q = S_r$, the bound is $O(c^9 (N +
i_t(\mathscr{T}))$ because $c = c_r = c_{qr}$ and $\theta = 0$.  For
well-behaved trees where $i_t(\mathscr{T}_q)$ is linear or sublinear in $N$,
this represents the current tightest worst-case runtime bound for nearest
neighbor search.

\section{Approximate kernel density estimation}
\label{sec:akde}

\citet{ram2009} present a clever technique for bounding the
running time of approximate kernel density estimation based on the properties of
the kernel, when the kernel is shift-invariant and satisfies a few assumptions.
We will restate these assumptions and provide an adapted proof using Theorem
\ref{thm:ct-runtime}, which gives a tighter bound.

Approximate kernel density estimation is a common application of dual-tree
algorithms \citep{gray2003nonparametric, nbody}.  Given a query set $S_q$, a
reference set $S_r$ of size $N$, and a kernel function $\mathcal{K}(\cdot,
\cdot)$, the true kernel density estimate for a query point $p_q$ is given as

\begin{equation}
f^*(p_q) = \sum_{p_r \in S_r} \mathcal{K}(p_q, p_r).
\end{equation}

In the case of an infinite-tailed kernel $\mathcal{K}(\cdot, \cdot)$, the
exact computation cannot be accelerated; thus, attention has turned towards
tractable approximation schemes.  Two simple schemes for the approximation of
$f^*(p_q)$ are well-known: {\it absolute value approximation} and {\it relative
value approximation}.  Absolute value approximation requires that each density
estimate $f(p_q)$ is within $\epsilon$ of the true estimate $f^*(p_q)$:

\begin{equation}
\label{eqn:ava}
| f(p_q) - f^*(p_q) | < \epsilon \; \; \forall p_q \in S_q.
\end{equation}

Relative value approximation is a more flexible approximation scheme; given some
parameter $\epsilon$, the requirement is that each density estimate is within a
relative tolerance of $f^*(p_q):$

\begin{equation}
\label{eqn:rva}
\frac{| f(p_q) - f^*(p_q) |}{| f^*(p_q) |} < \epsilon \; \; \forall p_q \in S_q.
\end{equation}

Kernel density estimation is related to the well-studied problem of kernel
summation, which can also be solved with dual-tree algorithms
\citep{lee2006faster, lee2008fast}.  In both of those problems, regardless of
the approximation scheme, simple geometric observations can be made to
accelerate computation: when $\mathcal{K}(\cdot, \cdot)$ is shift-invariant,
faraway points have very small kernel evaluations.  Thus, trees can be built on
$S_q$ and $S_r$, and node combinations can be pruned when the nodes are far
apart while still obeying the error bounds.

In the following two subsections, we will separately consider both the absolute
value approximation scheme and the relative value approximation scheme, under
the assumption of a shift-invariant kernel $\mathcal{K}(p_q, p_r) =
\mathcal{K}(\| p_q - p_r \|)$ which is monotonically decreasing and
non-negative.  In addition, we assume that there exists some bandwidth $h$ such
that $\mathcal{K}(d)$ must be concave for $d \in [0, h]$ and convex for $d \in
[h, \infty)$.  This assumption implies that the magnitude of the derivative
$|\mathcal{K}'(d)|$ is maximized at $d = h$.  These are not restrictive
assumptions; most standard kernels fall into this class, including the Gaussian,
exponential, and Epanechnikov kernels.

\subsection{Absolute value approximation}

A tree-independent algorithm for solving approximate kernel density estimation
with absolute value approximation under the previous assumptions on the kernel
is given as a \texttt{BaseCase()} function in Algorithm \ref{alg:kde_base_case}
and a \texttt{Score()} function in Algorithm \ref{alg:kde_score}  \citep[a
correctness proof can be found in][]{curtin2013tree}.  The list $f_p$ holds
partial kernel density estimates for each query point, and the list $f_n$ holds
partial kernel density estimates for each query node.  At the beginning of the
dual-tree traversal, the lists $f_p$ and $f_n$, which are both of size $O(N)$,
are each initialized to 0.  As the traversal proceeds, node combinations are
pruned if the difference between the maximum kernel value
$\mathcal{K}(d_{\min}(\mathscr{N}_q, \mathscr{N}_r))$ and the minimum kernel
value $\mathcal{K}(d_{\max}(\mathscr{N}_q, \mathscr{N}_r))$ is sufficiently
small (line \ref{alg:ava-kde-prune}).  If the node combination can be pruned,
then the partial node estimate is updated (line \ref{alg:ava-kde-update}).  When
node combinations cannot be pruned, \texttt{BaseCase()} may be called, which
simply updates the partial point estimate with the exact kernel evaluation (line
\ref{alg:kde-bc-update}).

After the dual-tree traversal, the actual kernel density estimates $f$
must be extracted.  This can be done by traversing the query tree and
calculating $f(p_q) = f_p(p_q) + \sum_{\mathscr{N}_i \in T}
f_n(\mathscr{N}_i)$, where $T$ is the set of nodes in $\mathscr{T}_q$ that
have $p_q$ as a descendant.  %It is possible to implement this in a way where
Each query node needs to be visited only once to perform this calculation; it
may therefore be accomplished in $O(N)$ time.

Note that this version is far simpler than other dual-tree algorithms that have
been proposed for approximate kernel density estimation \citep[see, for
instance][]{gray2003nonparametric}; however, this version is sufficient for our
runtime analysis.  Real-world implementations, such as the one found in
\textbf{mlpack} \citep{mlpack2013}, tend to be far more complex.

\begin{algorithm}[tb]
  \begin{algorithmic}[1]
    \STATE {\bfseries Input:} query point $p_q$, reference point $p_r$, list of
  kernel point estimates $\hat{f}_p$
    \STATE {\bfseries Output:} kernel value $\mathcal{K}(p_q, p_r)$

    \medskip
    \STATE $f_p(p_q) \gets f_p(p_q) + \mathcal{K}(p_q, p_r)$
\label{alg:kde-bc-update}
    \RETURN $\mathcal{K}(p_q, p_r)$
  \end{algorithmic}

  \caption{Approximate kernel density estimation \texttt{BaseCase()}}
  \label{alg:kde_base_case}
\end{algorithm}

\begin{algorithm}[tb]
  \begin{algorithmic}[1]
    \STATE {\bfseries Input:} query node $\mathscr{N}_q$, reference node
$\mathscr{N}_r$, list of node kernel estimates $\hat{f}_n$
    \STATE {\bfseries Output:} a score for the node combination $(\mathscr{N}_q,
\mathscr{N}_r)$, or $\infty$ if the combination should be pruned

    \medskip

    \IF{$\mathcal{K}(d_{\min}(\mathscr{N}_q, \mathscr{N}_r)) -
\mathcal{K}(d_{\max}(\mathscr{N}_q, \mathscr{N}_r)) < \epsilon$}
\label{alg:ava-kde-prune}
      \STATE $f_n(\mathscr{N}_q) \gets f_n(\mathscr{N}_q) + |
\mathscr{D}^p(\mathscr{N}_r) | \left(\mathcal{K}(d_{\min}(\mathscr{N}_q,
\mathscr{N}_r)) + \mathcal{K}(d_{\max}(\mathscr{N}_q, \mathscr{N}_r))\right)
/\;2$ \label{alg:ava-kde-update}
      \RETURN $\infty$
    \ENDIF

    \RETURN $\mathcal{K}(d_{\min}(\mathscr{N}_q, \mathscr{N}_r)) -
\mathcal{K}(d_{\max}(\mathscr{N}_q, \mathscr{N}_r))$
  \end{algorithmic}

  \caption{Absolute-value approximate kernel density estimation
\texttt{Score()}}
  \label{alg:kde_score}
\end{algorithm}

\begin{thm}
Assume that $\mathcal{K}(\cdot, \cdot)$ is a kernel satisfying the assumptions
of the previous subsection.  Then, given a query set $S_q$ and a reference set
$S_r$ with expansion constant $c_r$, and using the approximate kernel density
estimation \texttt{BaseCase()} and \texttt{Score()} as given in Algorithms
\ref{alg:kde_base_case} and \ref{alg:kde_score}, respectively, with the
traversal given in Algorithm \ref{alg:cover-tree-dual}, the running time of
approximate kernel density estimation for some error parameter $\epsilon$ is
bounded by
%\begin{equation}
$O(c_r^{8 + \lceil \log_2 \zeta \rceil} (N + i_t(\mathscr{T}_q) + \theta))$
%\end{equation}
with $\zeta = -\mathcal{K}'(h) \mathcal{K}^{-1}(\epsilon) \epsilon^{-1}$,
$i_t(\mathscr{T}_q)$ defined as in Definition \ref{def:imbalance}, and $\theta$
defined as in Lemma \ref{lem:extcase3}.

\label{thm:kde-bound}
\end{thm}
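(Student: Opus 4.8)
The plan is to apply the general runtime bound of Theorem~\ref{thm:ct-runtime} and reduce the whole problem to bounding $|R^*|$. First I would dispatch the easy ingredients: \texttt{BaseCase()} (Algorithm~\ref{alg:kde_base_case}) does one kernel evaluation and one addition, so $\chi = O(1)$; \texttt{Score()} (Algorithm~\ref{alg:kde_score}) does a constant number of kernel evaluations and comparisons plus, when it prunes, one update to $f_n$ using the cached descendant count $|\mathscr{D}^p(\mathscr{N}_r)|$, so $\psi = O(1)$. (The post-traversal pass extracting $f$ from $f_p$ and $f_n$ costs only $O(N)$ and is subsumed.) Thus Theorem~\ref{thm:ct-runtime} immediately yields $O(c_r^4 |R^*| (N + i_t(\mathscr{T}_q) + \theta))$, and it remains to show $|R^*| = O(c_r^{4 + \lceil \log_2 \zeta \rceil})$.

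To bound $|R^*|$, suppose $R^*$ is encountered against query node $\mathscr{N}_q$ (holding $p_q$) with maximum reference scale $s_r^{\max}$; as in the proof of Theorem~\ref{thm:nns} we may take $s_q + 1 \le s_r^{\max}$ and $s_r \le s_r^{\max}$ for every $\mathscr{N}_r \in R^*$ (holding $p_r$). Each such $\mathscr{N}_r$ survived the test on line~\ref{alg:ava-kde-prune}, so $\mathcal{K}(d_{\min}(\mathscr{N}_q, \mathscr{N}_r)) - \mathcal{K}(d_{\max}(\mathscr{N}_q, \mathscr{N}_r)) \ge \epsilon$, and I would squeeze two facts out of this one inequality. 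Since $\mathcal{K} \ge 0$, first $\mathcal{K}(d_{\min}) \ge \epsilon$, and monotonicity of $\mathcal{K}$ gives $d_{\min} \le \mathcal{K}^{-1}(\epsilon)$, whence $d(p_q, p_r) \le \mathcal{K}^{-1}(\epsilon) + 2^{s_q+1} + 2^{s_r+1} = \mathcal{K}^{-1}(\epsilon) + O(2^{s_r^{\max}})$ (trivial if $d_{\min} < 0$). Second, by the mean value theorem the left side equals $|\mathcal{K}'(\xi)|(d_{\max} - d_{\min})$ for some $\xi$, and since the concave/convex assumption makes $|\mathcal{K}'|$ peak at $h$, $\epsilon \le (-\mathcal{K}'(h))(d_{\max} - d_{\min})$; but $d_{\max} - d_{\min} = 2(2^{s_q+1} + 2^{s_r+1}) = O(2^{s_r^{\max}})$, so $2^{s_r^{\max}}$ is at least a constant multiple of $\epsilon/(-\mathcal{K}'(h))$. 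Combining the two, $\mathcal{K}^{-1}(\epsilon) = O(\zeta\, 2^{s_r^{\max}})$ with $\zeta = -\mathcal{K}'(h)\mathcal{K}^{-1}(\epsilon)\epsilon^{-1}$, hence $d(p_q, p_r) = O(\zeta\, 2^{s_r^{\max}})$ for every point $p_r$ held in $R^*$.

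Finally I would close with a packing argument. The set $P$ of points held by the nodes of $R^*$ is a subset of $C_{s_r^{\max}}$ (as noted just after the separation lemma for $R$), so the points of $P$ are pairwise separated by more than $2^{s_r^{\max}}$; they are distinct, so $|R^*| = |P|$; and all of $P$ lies in a ball of radius $O(\zeta)\,2^{s_r^{\max}}$ about $p_q$. Applying Lemma~\ref{lem:packing} with $\delta = 2^{s_r^{\max}}$ and $\rho = O(\zeta)$ --- here it is essential that Lemma~\ref{lem:packing} permits the ball to be centered at $p_q \notin S_r$, which is exactly why only $c_r$, rather than the bichromatic $c_{qr}$ of Theorem~\ref{thm:nns}, enters --- gives $|R^*| \le c_r^{2 + \lceil \log_2 \rho \rceil} = O(c_r^{4 + \lceil \log_2 \zeta \rceil})$, absorbing the constant in $\rho$ into the additive part of the exponent. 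Substituting back into Theorem~\ref{thm:ct-runtime} yields $O(c_r^{8 + \lceil \log_2 \zeta \rceil}(N + i_t(\mathscr{T}_q) + \theta))$, as claimed.

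The hard part is the $|R^*|$ bound, and within it the double use of the non-pruning inequality: read as $\mathcal{K}(d_{\min}) \ge \epsilon$ it only places $p_r$ within the absolute radius $\mathcal{K}^{-1}(\epsilon)$, which is useless on its own until the second reading --- via the mean value theorem and the fact that $|\mathcal{K}'|$ is maximized at $h$ --- shows that the scale at which pruning fails cannot be too fine relative to $\epsilon/(-\mathcal{K}'(h))$, so that $\mathcal{K}^{-1}(\epsilon) = O(\zeta\, 2^{s_r^{\max}})$, the form needed to feed Lemma~\ref{lem:packing}. Chasing the exact constants inside the ceiling (or tolerating a slightly larger one) is routine bookkeeping and the only tedium in the argument.
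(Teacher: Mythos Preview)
Your approach is correct and essentially identical to the paper's. Both arguments reduce to Theorem~\ref{thm:ct-runtime}, note $\chi,\psi=O(1)$, and bound $|R^*|$ by extracting from the non-pruning inequality (i) the radius estimate $d(p_q,p_r)\le\mathcal{K}^{-1}(\epsilon)+O(2^{s_r^{\max}})$ and (ii) a lower bound on $2^{s_r^{\max}}$ from the maximal slope $|\mathcal{K}'(h)|$, then feed the resulting $\rho=O(\zeta)$ into Lemma~\ref{lem:packing}; the paper phrases step (ii) as ``$R^*=\emptyset$ once $s_r^{\max}$ is small enough'' rather than via the mean value theorem, but the content is the same. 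Your one caveat is honest: to land exactly on the exponent $8+\lceil\log_2\zeta\rceil$ (rather than $8+\lceil\log_2\zeta\rceil+\text{const}$, which would cost an extra $c_r^{\text{const}}$ factor) you must track the constants as the paper does, writing $\omega=\mathcal{K}^{-1}(\epsilon)/2^{s_r^{\max}+1}+1$ and plugging in the sharp lower bound $s_r^{\max}>\log_2(-\epsilon/\mathcal{K}'(h))-2$.
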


\begin{proof}
It is clear that \texttt{BaseCase()} and \texttt{Score()} both take $O(1)$ time,
so Theorem \ref{thm:ct-runtime} implies the total runtime of the dual-tree
algorithm is bounded by $O(c_r^4 |R^*| (N + i_t(\mathscr{T}_q) + \theta))$.
Thus, we will bound $|R^*|$ using techniques related to those used by
\citet{ram2009}.  The bounding of $|R^*|$ is split into two sections: first,
we show that when the scale $s_r^{\max}$ is small enough, $R^*$ is empty.
Second, we bound $R^*$ when $s_r^{\max}$ is larger.

The \texttt{Score()} function is such that any node in $R^*$ for a given query
node $\mathscr{N}_q$ obeys

\begin{equation}
\mathcal{K}(d_{\min}(\mathscr{N}_q, \mathscr{N}_r)) -
\mathcal{K}(d_{\max}(\mathscr{N}_q, \mathscr{N}_r))
\ge \epsilon.
\end{equation}

Thus, we are interested in the maximum possible value $\mathcal{K}(a) -
\mathcal{K}(b)$ for a fixed value of $b - a > 0$.  Due to our assumptions, the
maximum value of $\mathcal{K}'(\cdot)$ is
$\mathcal{K}'(h)$; therefore, the maximum possible value of $\mathcal{K}(a) -
\mathcal{K}(b)$ is when the interval $[a, b]$ is centered on $h$.  This allows
us to say that $\mathcal{K}(a) - \mathcal{K}(b) \le \epsilon$ when $(b - a) \le
(-\epsilon / \mathcal{K}'(h))$.  Note that

\begin{eqnarray}
d_{\max}(\mathscr{N}_q, \mathscr{N}_r) - d_{\min}(\mathscr{N}_q, \mathscr{N}_r)
&\le& d(p_q, p_r) + 2^{s_r^{\max} + 1} - d(p_q, p_r) + 2^{s_r^{\max} + 1} \\
 &\le& 2^{s_r^{\max} + 2}.
\end{eqnarray}

Therefore, $R^* = \emptyset$ when
%\begin{eqnarray}
$2^{s_r^{\max} + 2} \le -\epsilon / \mathcal{K}'(h)$, or when
%\begin{equation}
$s_r^{\max} \le \log_2( -\epsilon / \mathcal{K}'(h) ) - 2$.
%\end{equation}
%
Consider, then, the case when $s_r^{\max} > \log_2( -\epsilon /
\mathcal{K}'(h) ) - 2$.  Because of the pruning rule, for any $\mathscr{N}_r \in
R^*$, $\mathcal{K}(d_{\min}(\mathscr{N}_q, \mathscr{N}_r)) > \epsilon$;
we may refactor this by applying definitions to show
%\begin{equation}
$d(p_q, p_r) < \mathcal{K}^{-1}(\epsilon) + 2^{s_r^{\max} + 1}$.
%\end{equation}
Therefore, bounding the number of points in the set
%\begin{equation}
$B_{S_r}(p_q, \mathcal{K}^{-1}(\epsilon) + 2^{s_r^{\max} + 1}) \cap
C_{s_r^{\max}}$
%\end{equation}
is sufficient to bound $|R^*|$.  For notational convenience, define $\omega =
(\mathcal{K}^{-1}(\epsilon) / 2^{s_r^{\max} + 1}) + 1$, and the statement may be
more concisely written as $B_{S_r}(p_q, \omega 2^{s_r^{\max} + 1}) \cap
C_{s_r^{\max}}$.

Using Lemma \ref{lem:packing} with $\delta = 2^{s_r^{\max}}$ and $\rho = 2
\omega$ gives $|R^*| = c_r^{3 + \lceil \log_2 \omega \rceil}$.

%The procedure to bound this will be the same as
%for the nearest neighbor search proof---a packing bound.  See that

%\begin{eqnarray}
%B_{S_r}\left(p_q, \mathcal{K}^{-1}(\epsilon) + 2^{s_r^{\max} + 1}\right) &\subseteq&
%B_{S_r}\left(p_r, 2 (\mathcal{K}^{-1}(\epsilon) + 2^{s_r^{\max} + 1})\right) \\
% &=& B_{S_r}\left(p_r, 2^{s_r^{\max} + 2} \left(
%\frac{\mathcal{K}^{-1}(\epsilon)}{2^{s_r^{\max} + 1}} + 1\right)\right).
%\label{eqn:kde_ball}
%\end{eqnarray}

%For notational convenience, define $\omega = \mathcal{K}^{-1}(\epsilon) /
%2^{s_r^{\max} + 1} + 1$.  Applying the definition of expansion constant,

%\begin{equation}
%\left| B_{S_r}\left(p_r, 2^{s_r^{\max} + 2} \omega \right) \right|
%\le c_r^{3 + \log_2 \omega} \left|
%B_{S_r}\left(p_r, 2^{s_r^{\max} - 1}\right) \right|.
%\end{equation}

%The total number of nodes in $R^*$ is bounded by the number of disjoint balls of
%size $2^{s_r^{\max} - 1}$ that can be packed into the ball in Equation
%\ref{eqn:kde_ball}.  In the worst case, this packing is perfect, and

%\begin{equation}
%\left| B_{S_r}(p_q, \mathcal{K}^{-1}(\epsilon) + 2^{s_r^{\max} + 1}) \cap
%C_{s_r^{\max}} \right| \le \frac{\left| B_{S_r}\left(p_r, 2^{s_r^{\max} + 2}
%\omega \right) \right|}{\left| B_{S_r}\left(p_r, 2^{s_r^{\max} - 1}\right) \right|} \le c_r^{3
%+ \log_2 \omega}.
%\end{equation}

The value $\omega$ is maximized when $s_r^{\max}$ is minimized.  Using the lower
bound on $s_r^{\max}$, $\omega$ is bounded as
%
%\begin{equation}
%$\omega = \frac{\mathcal{K}^{-1}(\epsilon)}{2^{s_r^{\max} + 1} + 1} \le
%\frac{\mathcal{K}^{-1}(\epsilon)}{2^{s_r^{\max} + 1}} = -2 \mathcal{K}'(h)
%\mathcal{K}^{-1}(\epsilon) \epsilon^{-1}.
%\end{equation}
$\omega = -2 \mathcal{K}'(h) \mathcal{K}^{-1}(\epsilon) \epsilon^{-1}$.
Finally, with $\zeta = -\mathcal{K}'(h) \mathcal{K}^{-1}(\epsilon)
\epsilon^{-1}$, we are able to conclude that $|R^*| \le c_r^{3 + \lceil \log_2
(2 \zeta ) \rceil} = c_r^{4 + \lceil \log_2 \zeta \rceil}$.  Therefore, the
entire dual-tree traversal takes $O(c_r^{8 + \lceil \log_2 \zeta \rceil} (N +
\theta))$ time.

The postprocessing step to extract the estimates $f(\cdot)$ requires one
traversal of the tree $\mathscr{T}_r$; the tree has $O(N)$ nodes, so this takes
only $O(N)$ time.
%However, there is one more component to consider: the extraction of the
%estimates $f(p_q)$ for each $p_q$.  As discussed earlier, this process can be
%completed in one breadth-first or depth-first traversal of the query tree
%$\mathscr{T}_q$; this takes $O(N)$ time because there are $O(N)$ nodes.%: at
%each node $\mathscr{N}_q$, distribute the partial kernel density estimate to
%each of its children; for each $\mathscr{N}_c \in \mathscr{C}(\mathscr{N}_q)$,
%set
%\begin{equation}
%$f_n(\mathscr{N}_c) \gets f_n(\mathscr{N}_c) +
%f_n(\mathscr{N}_q).$
%\end{equation}
%Continue this process until the leaf nodes are reached; then, for each leaf node
%$\mathscr{N}_l$ which holds point $p_l$, take
%\begin{equation}
%$f(p_l) \gets f_p(p_l) + f_n(\mathscr{N}_l)$
%\end{equation}
%and this will produce the correct estimates $f(\cdot)$.  Because
%there are $O(N)$ nodes and each node has up to $c^4$ children, this process can
%take no longer than $O(c^4 N)$ time.
This is less than the runtime of the
dual-tree traversal, so the runtime of the dual-tree traversal dominates the
algorithm's runtime, and the theorem holds.
\end{proof}

The dependence on $\epsilon$ (through $\zeta$) is expected: as $\epsilon \to 0$
and the search becomes exact, $\zeta$ diverges both because $\epsilon^{-1}$
diverges and also because $\mathcal{K}^{-1}(\epsilon)$ diverges, and the runtime
goes to the worst-case $O(N^2)$; exact kernel density estimation means no nodes
can be pruned at all.

%As we will show, many kernels have a property such that $-\mathcal{K}'(h)
%\mathcal{K}^{-1}(\epsilon)$ has zero dependence on the bandwidth of the kernel.
%Although this result may seem counterintuitive, the only quantities of the
%kernel that affect the runtime are the maximum derivative (which is at the
%inflection point) and the radius of the ball for which $\mathcal{K}(\cdot) \ge
%\epsilon$.

%To see an example of the lack of dependence on kernel bandwidth, consider the
%common Gaussian kernel:

For the Gaussian kernel with bandwidth $\sigma$ defined by $\mathcal{K}_g(d) =
\exp(-d^2 / (2 \sigma^2))$, $\zeta$ does not
depend on the kernel bandwidth; only
the approximation parameter $\epsilon$.  For this kernel, $h = \sigma$ and
therefore $-\mathcal{K}'_g(h) = \sigma^{-1} e^{-1 / 2}$.  Additionally,
$\mathcal{K}_g^{-1}(\epsilon) = \sigma \sqrt{2 \ln (1 / \epsilon)}$.  This means
that for the Gaussian kernel, $\zeta = \sqrt{(-2 \ln \epsilon) / (e
\epsilon^2)}$.  Again, as $\epsilon \to 0$, the
runtime diverges; however, note that there is no dependence on the kernel
bandwidth $\sigma$.  To demonstrate the relationship of runtime to $\epsilon$,
see that for a reasonably chosen $\epsilon = 0.05$, the runtime is approximately
$O(c_r^{8.89} (N + \theta))$; for $\epsilon = 0.01$, the runtime is
approximately $O(c_r^{11.52} (N + \theta))$.  For very small $\epsilon =
0.00001$, the runtime is approximately $O(c_r^{22.15} (N + \theta))$.

Next, consider the exponential kernel:
$\mathcal{K}_l(d) = \exp(-d / \sigma)$.  For this kernel, $h = 0$ (that is, the
kernel is always convex), so then $\mathcal{K}'_l(h) = \sigma^{-1}$.  Simple
algebraic manipulation gives $\mathcal{K}^{-1}_l(\epsilon) = -\sigma \ln
\epsilon$, resulting in $\zeta = -\mathcal{K}'_l(h) \mathcal{K}^{-1}_l(\epsilon)
\epsilon^{-1} = \epsilon^{-1} \ln \epsilon$.  So both the exponential and Gaussian
kernels do not exhibit dependence on the bandwidth.

To understand the lack of dependence on kernel bandwidth more intuitively,
consider that as the kernel bandwidth increases, two things happen: {\it (a)}
the reference set $R$ becomes empty at larger scales, and {\it (b)}
$\mathcal{K}^{-1}(\epsilon)$ grows, allowing less pruning at higher levels.
These effects are opposite, and for the Gaussian and exponential kernels they
cancel each other out, giving the same bound regardless of bandwidth.

\subsection{Relative Value Approximation}

Approximate kernel density estimation using relative-value approximation may be
bounded by reducing the absolute-value approximation algorithm (in linear time
or less) to relative-value approximation.  This is the same strategy as
performed by \citet{ram2009}.

First, we must establish a \texttt{Score()} function for relative value
approximation.  The difference between Equations \ref{eqn:ava} and \ref{eqn:rva}
is the division by the term $|f^*(p_q)|$.  But we can quickly bound
$|f^*(p_q)|$:

\begin{equation}
|f^*(p_q)| \ge N \mathcal{K}\left(\max_{p_r \in S_r} d(p_q, p_r)\right).
\end{equation}

\begin{algorithm}[tb]
  \begin{algorithmic}[1]
    \STATE {\bfseries Input:} query node $\mathscr{N}_q$, reference node
$\mathscr{N}_r$, list of node kernel estimates $\hat{f}_n$
    \STATE {\bfseries Output:} a score for the node combination $(\mathscr{N}_q,
\mathscr{N}_r)$, or $\infty$ if the combination should be pruned

    \medskip

    \IF{$\mathcal{K}(d_{\min}(\mathscr{N}_q, \mathscr{N}_r)) -
\mathcal{K}(d_{\max}(\mathscr{N}_q, \mathscr{N}_r)) < \epsilon
\mathcal{K}^{\max}$}
\label{alg:rva-kde-prune}
      \STATE $f_n(\mathscr{N}_q) \gets f_n(\mathscr{N}_q) + |
\mathscr{D}^p(\mathscr{N}_r) | \left(\mathcal{K}(d_{\min}(\mathscr{N}_q,
\mathscr{N}_r)) + \mathcal{K}(d_{\max}(\mathscr{N}_q, \mathscr{N}_r))\right)
/\;2$ \label{alg:rva-kde-update}
      \RETURN $\infty$
    \ENDIF

    \RETURN $\mathcal{K}(d_{\min}(\mathscr{N}_q, \mathscr{N}_r)) -
\mathcal{K}(d_{\max}(\mathscr{N}_q, \mathscr{N}_r))$
  \end{algorithmic}

  \caption{Relative-value approximate kernel density estimation
\texttt{Score()}}
  \label{alg:kde_rva_score}
\end{algorithm}

This is clearly true: each point in $S_r$ must contribute more than
$\mathcal{K}(\max_{p_r \in S_r} d(p_q, p_r))$ to $f^*(p_q)$.  Now, we may revise
the relative approximation condition in Equation \ref{eqn:rva}:

\begin{equation}
| f(p_q) - f^*(p_q) | \le \epsilon \mathcal{K}^{\max}
\end{equation}

\noindent where $\mathcal{K}^{\max}$ is lower bounded by $\mathcal{K}(\max_{p_r
\in S_r} d(p_q, p_r))$.  Assuming we have some estimate $\mathcal{K}^{\max}$,
this allows us to create a \texttt{Score()} algorithm, given in Algorithm
\ref{alg:kde_rva_score}.

Using this, we may prove linear runtime bounds for relative value approximate
kernel density estimation.

\begin{thm}
Assume that $\mathcal{K}(\cdot, \cdot)$ is a kernel satisfying the same
assumptions as Theorem \ref{thm:kde-bound}.  Then, given a query set $S_q$ and a
reference set $S_r$ both of size $O(N)$, it is possible to perform relative
value approximate kernel density estimation (satisfying the condition of
Equation \ref{eqn:rva}) in $O(N)$ time, assuming that the expansion constant
$c_r$ of $S_r$ is not dependent on $N$.
\end{thm}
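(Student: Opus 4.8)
The plan is to reduce relative-value approximation to the absolute-value algorithm that Theorem~\ref{thm:kde-bound} already handles, paying only $O(N)$ overhead, following the reduction strategy already outlined in the text and used by \citet{ram2009}. The first step is to produce a single global lower bound $\mathcal{K}^{\max}$ on the true density that is valid for every query point, i.e.\ $\mathcal{K}^{\max} \le f^*(p_q)$ for all $p_q \in S_q$. Using the furthest-descendant property of cover trees, every query point lies within $2^{s_q^T + 1}$ of the point at the root of $\mathscr{T}_q$ and every reference point within $2^{s_r^T + 1}$ of the point at the root of $\mathscr{T}_r$; hence, writing $p$ and $p'$ for those two root points, $D^* := d(p, p') + 2^{s_q^T + 1} + 2^{s_r^T + 1}$ upper-bounds $\max_{p_q \in S_q,\, p_r \in S_r} d(p_q, p_r)$ and is computable in $O(1)$ time from the built trees. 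Since $\mathcal{K}$ is non-negative and monotonically decreasing, each of the $N$ summands of $f^*(p_q)$ is at least $\mathcal{K}(D^*)$, so $\mathcal{K}^{\max} := N\,\mathcal{K}(D^*)$ is the desired lower bound.

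Next, I would run the dual-tree algorithm with \texttt{BaseCase()} from Algorithm~\ref{alg:kde_base_case} and \texttt{Score()} from Algorithm~\ref{alg:kde_rva_score} using this value of $\mathcal{K}^{\max}$; note that this is precisely the absolute-value algorithm of the previous subsection run with error tolerance $\epsilon' := \epsilon\,\mathcal{K}^{\max}$. By the correctness argument of \citet{curtin2013tree}, the returned estimates satisfy $|f(p_q) - f^*(p_q)| \le \epsilon' = \epsilon\,\mathcal{K}^{\max} \le \epsilon\, f^*(p_q)$ for every $p_q$, which is exactly the relative-error guarantee of Equation~\ref{eqn:rva}. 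All the extra work --- computing $D^*$ and $\mathcal{K}^{\max}$, and the single pass over $\mathscr{T}_q$ that assembles each $f(p_q)$ from the partial estimates $f_p$ and $f_n$ --- is $O(N)$, so it does not dominate.

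The running time then follows by applying Theorem~\ref{thm:kde-bound} verbatim with $\epsilon$ replaced by $\epsilon'$: the dual-tree traversal costs $O\!\left(c_r^{\,8 + \lceil \log_2 \zeta' \rceil}\,(N + i_t(\mathscr{T}_q) + \theta)\right)$ with $\zeta' = -\mathcal{K}'(h)\,\mathcal{K}^{-1}(\epsilon')\,(\epsilon')^{-1}$, and the $O(N)$ pre- and post-processing leaves the order unchanged. To collapse this to $O(N)$ I would collect the needed hypotheses: $c_r$ is independent of $N$ by assumption; $i_t(\mathscr{T}_q) = O(N)$ and $\theta = O(N)$ hold under the same mild well-behavedness conditions invoked elsewhere in the paper (Section~\ref{sec:imbalance} and Lemma~\ref{lem:extcase3}); and $\zeta'$ is bounded by a constant independent of $N$. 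Under these, both the base $c_r$ and the exponent $8 + \lceil \log_2 \zeta' \rceil$ are $N$-free constants, and the theorem follows.

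The main obstacle is the last of those hypotheses --- controlling $\zeta'$. Unlike the absolute-value bound, whose $\zeta$ depends only on the kernel and $\epsilon$, here $\epsilon' = \epsilon N\,\mathcal{K}(D^*)$ shrinks as the data diameter $D^*$ grows, while $\mathcal{K}^{-1}(\epsilon')$ grows simultaneously, so $\zeta'$ can in principle blow up with $N$ even when $c_r$ stays bounded (uniformly spaced collinear points, for instance, have $c_r = O(1)$ but diameter $\Theta(N)$). So the clean statement really needs a bound on the aspect ratio of $S_q \cup S_r$ --- equivalently, a lower bound on $\mathcal{K}^{\max}$ relative to $\epsilon$ that keeps $\zeta'$ constant, which also pins down $\theta = O(N)$. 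Once such an assumption is made explicit the remainder of the argument is routine; accordingly, I would spend most of the care on stating the diameter/aspect-ratio hypothesis precisely and verifying that $\lceil \log_2 \zeta' \rceil$ is the only new $N$-dependent quantity the reduction introduces.
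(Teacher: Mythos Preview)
Your approach is essentially the paper's: compute a global lower bound on $f^*(p_q)$ from the two tree roots in $O(1)$ time, feed it into the \texttt{Score()} of Algorithm~\ref{alg:kde_rva_score} so that the relative-error problem becomes an instance of the absolute-error problem, and invoke Theorem~\ref{thm:kde-bound}. The paper's proof is actually shorter and less careful than yours---it simply asserts that Theorem~\ref{thm:kde-bound} ``may be adapted to the very slightly different \texttt{Score()} rule \ldots\ while still providing an $O(N)$ bound'' and then exhibits the constant-time computation of $\mathcal{K}^{\max}$ from $d_{\max}(\mathscr{N}_q^R,\mathscr{N}_r^R)$, without ever writing down or bounding the resulting $\zeta'$.

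Your final paragraph identifies a genuine issue that the paper does not confront: the effective tolerance $\epsilon' = \epsilon\,\mathcal{K}^{\max}$ depends on the data diameter, so $\zeta'$ can grow with $N$ even when $c_r$ does not (your collinear example is exactly right). The paper sidesteps this entirely, and even remarks afterward that ``we have not shown tighter bounds because the algorithm we have proposed is not useful in practice.'' So you have not missed anything the paper supplies; if anything, you have been more scrupulous about where the argument leans on unstated aspect-ratio assumptions.
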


\begin{proof}
It is easy to see that Theorem \ref{thm:kde-bound} may be adapted to the very
slightly different \texttt{Score()} rule of Algorithm \ref{alg:kde_rva_score}
while still providing an $O(N)$ bound.  With that \texttt{Score()} function, the
dual-tree algorithm will return relative-value approximate kernel density
estimates satisfying Equation \ref{eqn:rva}.

We now turn to the calculation of $\mathcal{K}^{\max}$.  Given the cover trees
$\mathscr{T}_q$ and $\mathscr{T}_r$ with root nodes $\mathscr{N}_{r}^{R}$ and
$\mathscr{N}_{r}^{R}$, respectively, we may calculate a suitable
$\mathcal{K}^{\max}$ value in constant time:

\begin{equation}
\mathcal{K}^{\max} = d_{\max}(\mathscr{N}_q^R, \mathscr{N}_r^R) = d(p_q^R,
p_r^R) + 2^{s_q^{\max} + 1} + 2^{s_r^{\max} + 1}.
\end{equation}

This proves the statement of the theorem.
\end{proof}

In this case, we have not shown tighter bounds because the algorithm we have
proposed is not useful in practice.  For an example of a better relative-value
approximate kernel density estimation dual-tree algorithm, see the work of
\citet{gray2003nonparametric}.

\section{Range search and range count}
\label{sec:rs}

In the range search problem, the task is to find the set of reference points

\begin{equation}
S[p_q] = \{ p_r \in S_r : d(p_q, p_r) \in [l, u] \}
\end{equation}

\noindent for each query point $p_q$, where $[l, u]$ is the given
range.  The range count problem is practically identical, but only the size of
the set, $|S[p_q]|$, is desired.
Our proof works for both of these algorithms
similarly, but we will focus on range search.  A \texttt{BaseCase()} and
\texttt{Score()} function are given in Algorithms \ref{alg:rs_bc} and
\ref{alg:rs_sc}, respectively \citep[a correctness proof can be found
in][]{curtin2013tree}.  The sets $N[p_q]$ (for each $p_q$) are
initialized to $\emptyset$ at the beginning of the traversal.

In order to bound the running time of dual-tree range search, we require better
notions for understanding the difficulty of the problem.  Observe that if the
range is sufficiently large, then for every query point $p_q$, $S[p_q] = S_r$.
Clearly, for $S_q \sim S_r \sim O(N)$, this cannot be solved in anything less
than quadratic time simply due to the time required to fill each output array
$S[p_q]$.  Define the maximum result size for a given query set $S_q$, reference
set $S_r$, and range $[l, u]$ as

\begin{equation}
| S_{\max} | = \max_{p_q \in S_q} | S[p_q] |.
\label{eqn:smax}
\end{equation}

\begin{algorithm}[tb]
%  \SetAlgoLined
\begin{algorithmic}[1]
    \STATE {\bfseries Input:} query point $p_q$, reference point $p_r$, range
sets $N[p_q]$ and range $[l, u]$
    \STATE {\bfseries Output:} distance $d$ between $p_q$ and $p_r$

%    \medskip
%    \STATE $d \gets \| p_q - p_r \|$
%    \medskip

    \IF{$d(p_q, p_r) \in [r_{\min}, r_{\max}]$ \AND \texttt{BaseCase($p_q$, $p_r$)} not yet called}
    \STATE  $S[p_q] \gets S[p_q] \cup \{ p_r \}$
    \ENDIF

    \RETURN $d$
  \end{algorithmic}
  \caption{Range search \texttt{BaseCase()}}
  \label{alg:rs_bc}
\end{algorithm}

\begin{algorithm}[tb]
  \begin{algorithmic}[1]
    \STATE {\bfseries Input:} query node $\mathscr{N}_q$, reference node
$\mathscr{N}_r$
    \STATE {\bfseries Output:} a score for the node combination $(\mathscr{N}_q,
\mathscr{N}_r)$, or $\infty$ if the combination should be pruned

    \medskip

    \IF{$d_{\min}(\mathscr{N}_q, \mathscr{N}_r) \in [l, u]$ or
$d_{\max}(\mathscr{N}_q, \mathscr{N}_r) \in [l, u]$}
      \RETURN $d_{\min}(\mathscr{N}_q, \mathscr{N}_r)$
    \ENDIF

    \RETURN $\infty$
  \end{algorithmic}
  \caption{Range search \texttt{Score()}}
  \label{alg:rs_sc}
\end{algorithm}

Small $| S_{\max} |$ implies an easy problem; large $| S_{\max} |$ implies a
difficult problem.  For bounding the running time of range search, we require
one more notion of difficulty, related to how $| S_{\max} |$ changes due to
changes in the range $[l, u]$.

\begin{defn}
For a range search problem with query set $S_q$, reference set $S_r$, range $[l,
u]$, and results $S[p_q]$ for each query point $p_q$ given as

\begin{equation}
S[p_q] = \{ p_r : p_r \in S_r, l \le d(p_q, p_r) \le u \},
\end{equation}

\noindent define the {\it $\alpha$-expansion} of the range set $S[p_q]$ as the slightly larger set

\begin{equation}
S^{\alpha}[p_q] = \{ p_r : p_r \in S_r, (1 - \alpha) l \le d(p_q, p_r) \le (1 +
\alpha) u \}.
\end{equation}
\end{defn}

When the $\alpha$-expansion of the set $S_{\max}$ is approximately the same size
as $S_{\max}$, then the problem would not be significantly more difficult if the
range $[l, u]$ was increased slightly.  Using these notions, then, we may now
bound the running time of range search.

\begin{thm}
Given a reference set $S_r$ of size $N$ with expansion constant $c_r$, and a
query set $S_q$ of size $O(N)$, a search range of $[l, u]$, and using the range
search \texttt{BaseCase()} and \texttt{Score()} as given in Algorithms
\ref{alg:rs_bc} and \ref{alg:rs_sc}, respectively, with the standard cover tree
pruning dual-tree traversal as given in Algorithm \ref{alg:cover-tree-dual}, and
also assuming that for some $\alpha > 0$,

\begin{equation}
| S^{\alpha}[p_q] \setminus S[p_q] | \le C \; \; \forall \; p_q \in S_q,
\end{equation}

%\begin{equation}
%\left| \{ p_r: p_r \in S_r, (1 - \alpha) l \le d(p_q, p_r) \le (1 + \alpha) u \}
%\right| = O(1),
%\end{equation}

\noindent the running time of range search or range count is bounded by

\begin{equation}
O\left(c_r^{4} \max\left(c_r^{4 + \beta}, |S_{\max}| + C\right) (N +
i_t(\mathscr{N}_q) + \theta)
\right)
\end{equation}

with $\theta$ defined as in Lemma \ref{lem:extcase3},
$\beta = \lceil \log_2 (1 + \alpha^{-1}) \rceil$, and $S_{\max}$ as defined in
Equation \ref{eqn:smax}.
%\end{equation}
\label{thm:rs}
\end{thm}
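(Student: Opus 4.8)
The plan is to mirror the structure of the proofs of Theorems~\ref{thm:nns} and~\ref{thm:kde-bound}: since the range search \texttt{BaseCase()} and \texttt{Score()} both run in $O(1)$ time, Theorem~\ref{thm:ct-runtime} immediately reduces the problem to bounding $|R^*|$, the maximum size of the reference set $R$ over the course of Algorithm~\ref{alg:cover-tree-dual}. Establishing $|R^*| = O(\max(c_r^{4+\beta}, |S_{\max}| + C))$ then yields the stated bound after multiplying by the $c_r^4 (N + i_t(\mathscr{T}_q) + \theta)$ factor from Theorem~\ref{thm:ct-runtime}. The argument for range count is identical, since only the bookkeeping inside \texttt{BaseCase()} changes, not its cost.

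To bound $|R^*|$, suppose $R^*$ is encountered with maximum reference scale $s_r^{\max}$ and query node $\mathscr{N}_q$ holding point $p_q$, and let $P$ be the set of points held by the nodes in $R^*$. By the separation lemma proved immediately after Algorithm~\ref{alg:cover-tree-dual}, $P \subseteq C_{s_r^{\max}}$, so every pair of points in $P$ is separated by more than $2^{s_r^{\max}}$. Every $\mathscr{N}_r \in R^*$ with point $p_r$ survived the pruning check of line~\ref{alg:line:ct-dual-ref-score}; a reference node with $d_{\max}(\mathscr{N}_q, \mathscr{N}_r) < l$ or $d_{\min}(\mathscr{N}_q, \mathscr{N}_r) > u$ is pruned, so survival forces $d_{\min}(\mathscr{N}_q, \mathscr{N}_r) \le u$ and $d_{\max}(\mathscr{N}_q, \mathscr{N}_r) \ge l$. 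Expanding the cover-tree definitions of $d_{\min}$ and $d_{\max}$ and using $s_q + 1 \le s_r^{\max}$ and $s_r \le s_r^{\max}$, this gives $l - 2^{s_r^{\max}+2} \le d(p_q, p_r) \le u + 2^{s_r^{\max}+2}$; that is, $P$ is contained in a thin annular neighbourhood of the shell $\{p_r \in S_r : d(p_q, p_r) \in [l,u]\}$, intersected with $C_{s_r^{\max}}$.

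Next I would split $P$ by distance to $p_q$ into three pieces. Piece (i): points with $d(p_q, p_r) \in [(1-\alpha)l, (1+\alpha)u]$; these all lie in $S^{\alpha}[p_q]$, and since $S[p_q] \subseteq S^{\alpha}[p_q]$ the assumed $\alpha$-expansion bound gives at most $|S^{\alpha}[p_q]| = |S[p_q]| + |S^{\alpha}[p_q] \setminus S[p_q]| \le |S_{\max}| + C$ of them. Piece (ii): points with $d(p_q, p_r) < (1-\alpha)l$; combined with the annular bound this is nonempty only when $2^{s_r^{\max}+2} > \alpha l$, forcing $l < 2^{s_r^{\max}+2}\alpha^{-1}$, so these points lie in $B_{S_r}(p_q, 2^{s_r^{\max}+2}(1+\alpha^{-1}))$. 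Piece (iii): points with $d(p_q, p_r) > (1+\alpha)u$; this is nonempty only when $2^{s_r^{\max}+2} > \alpha u$, forcing $u < 2^{s_r^{\max}+2}\alpha^{-1}$ and hence $d(p_q, p_r) \le u + 2^{s_r^{\max}+2} < 2^{s_r^{\max}+2}(1+\alpha^{-1})$, so these points also lie in $B_{S_r}(p_q, 2^{s_r^{\max}+2}(1+\alpha^{-1}))$. For pieces (ii) and (iii), Lemma~\ref{lem:packing} with $\delta = 2^{s_r^{\max}}$ and $\rho = 4(1+\alpha^{-1})$ bounds the count by $c_r^{2 + \lceil \log_2(4(1+\alpha^{-1}))\rceil} = c_r^{4 + \beta}$, using $\lceil \log_2(4(1+\alpha^{-1}))\rceil = 2 + \lceil \log_2(1+\alpha^{-1})\rceil = 2 + \beta$. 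Summing the three pieces gives $|R^*| = |P| \le |S_{\max}| + C + 2 c_r^{4+\beta} = O(\max(c_r^{4+\beta}, |S_{\max}| + C))$, and substituting into Theorem~\ref{thm:ct-runtime} completes the proof.

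I expect the main obstacle to be correctly identifying the window in which surviving reference points must lie, in particular recognizing that the awkward ``straddle'' case $d_{\min} < l$, $d_{\max} > u$ is already subsumed by the two one-sided inequalities $d_{\min} \le u$ and $d_{\max} \ge l$, and then checking that the two tail pieces (ii) and (iii) are nonempty only when $s_r^{\max}$ is large enough that the single radius $4(1+\alpha^{-1})\,2^{s_r^{\max}}$ dominates both $l$ and $u$, so that one application of the packing lemma with an $\alpha$-dependent $\rho$ suffices. Everything else is the by-now-routine expansion-constant packing bookkeeping already used in the preceding two theorems.
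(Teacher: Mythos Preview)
Your proposal is correct and uses the same ingredients as the paper (Theorem~\ref{thm:ct-runtime} to reduce to bounding $|R^*|$, the observation that surviving reference points lie in an annulus of width $2^{s_r^{\max}+2}$ around the shell $[l,u]$, the packing Lemma~\ref{lem:packing} with $\rho = 4(1+\alpha^{-1})$, and the $\alpha$-expansion hypothesis), but you organize the bound on $|R^*|$ differently. The paper performs a two-case split on the size of $u$ relative to $2^{s_r^{\max}}$: when $u$ is small it bounds the entire annulus by the packing lemma, and when $u$ is large it claims the entire annulus sits inside $S^{\alpha}[p_q]$. You instead partition the surviving points by distance into a middle piece (always inside $S^{\alpha}[p_q]$, hence $\le |S_{\max}|+C$) and two tail pieces, each of which is either empty or contained in a ball of radius $4(1+\alpha^{-1})2^{s_r^{\max}}$ where packing applies.

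Your decomposition is a bit more robust: the paper's large-$u$ case asserts $B_{S_r}(p_q,u+2^{s_r^{\max}+1})\setminus B_{S_r}(p_q,l-2^{s_r^{\max}+1}) \subseteq B_{S_r}(p_q,(1+\alpha)u)\setminus B_{S_r}(p_q,(1-\alpha)l)$, but the inner-radius containment needs $2^{s_r^{\max}+1}\le \alpha l$, which is not implied by the hypothesis on $u$. Your three-piece split handles the lower endpoint $l$ explicitly via piece~(ii), so it sidesteps this issue while still producing the same $O(\max(c_r^{4+\beta},|S_{\max}|+C))$ bound on $|R^*|$.
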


\begin{proof}
Both \texttt{BaseCase()} (Algorithm \ref{alg:rs_bc}) and \texttt{Score()}
(Algorithm \ref{alg:rs_sc}) take $O(1)$ time.  Therefore, using Lemma
\ref{thm:ct-runtime}, we know that the runtime of the algorithm is bounded by
$O(c_r^4 |R^*| (N + i_t(\mathscr{N}_q) + \theta))$.  As with the previous
proofs, then, our only task is to bound the maximum size of the reference set,
$|R^*|$.

By the pruning rule, for a query node $\mathscr{N}_q$, the reference set $R^*$
is made up of reference nodes $\mathscr{N}_r$ that are within a margin of
$2^{s_q + 1} + 2^{s_r + 1} \le 2^{s_r^{\max} + 2}$ of the range $[l, u]$.  Given
that $p_r$ is the point in $\mathscr{N}_r$,

\begin{equation}
%p_r &\in& \left( B_{S_r}(p_q, u + \lambda_q + \lambda_r) \cap
%C_{s_r^{\max}}\right) \setminus \left( B_{S_r}(p_q, l - \lambda_q - \lambda_r)
%\cap C_{s_r^{\max}} \right) \\
p_r \in \left( B_{S_r}(p_q, u + 2^{s_r^{\max} + 2}) \cap
C_{s_r^{\max}}\right)\setminus \left( B_{S_r}(p_q, l - 2^{s_r^{\max} + 2}) \cap
C_{s_r^{\max}} \right). \label{eqn:rsballs}
\end{equation}

A bound on the number of elements in this set is a bound on $|R^*|$.  %First,
%consider the case where $u \le 2^{s_r^{\max} + 1}$.  Then,
%\begin{eqnarray}
%|B_{S_r}(p_q, u + 2^{s_r^{\max} + 1})| - |B_{S_r}(p_q, l - 2^{s_r^{\max} + 1})|
%&\le& |B_{S_r}(p_q, 2^{s_r^{\max} + 2})| \\
%&\le& c_{qr}^3 |B_{S_r}(p_q, 2^{s_r^{\max} - 1})| \\
%&\le& c_{qr}^3
%\end{eqnarray}
%\noindent where the last step follows because $|B_{S_r}(p_q, 2^{s_r^{\max} -
%1})| = 1$ due to the separation invariant.
First, consider the case where $u \le \alpha^{-1} 2^{s_r^{\max} + 2}$.  Ignoring
the smaller ball, take $\delta = 2^{s_r^{\max}}$ and $\rho = 4 (1 +
\alpha^{-1})$ and apply Lemma \ref{lem:packing} to produce the bound

\begin{equation}
|R^*| \le c_r^{4 + \lceil \log_2(1 + \alpha^{-1}) \rceil}.
\end{equation}

Now, consider the other case: $u > \alpha^{-1} 2^{s_r^{\max} + 1}$.
%Then,
%\begin{eqnarray}
%B_{S_r}(p_q, u + 2^{s_r^{\max} + 1}) &\subseteq& B_{S_r}(p_q, (1 + \alpha) u), \\
%B_{S_r}(p_q, l - 2^{s_r^{\max} + 1}) &\supseteq& B_{S_r}(p_q, (1 - \alpha) l).
%\end{eqnarray}
This means

\begin{equation}
B_{S_r}(p_q, u + 2^{s_r^{\max} + 1}) \setminus B_{S_r}(p_q, l - 2^{s_r^{\max} +
1}) \subseteq B_{S_r}(p_q, (1 + \alpha) u) \setminus B_{S_r}(p_q, (1 - \alpha)
l).
\end{equation}

This set is necessarily a subset of $S^{\alpha}[p_q]$; by
assumption, the number of points in this set is bounded above by $|S_{\max}| +
C$.  We may then conclude that $|R^*| \le |S_{\max}| + C$.  By taking the
maximum of the sizes of $|R^*|$ in both cases above, we obtain the statement of
the theorem.
\end{proof}

This bound displays both the expected dependence on $c_r$ and $|S_{\max}|$.  As
the largest range set $S_{\max}$ increases in size (with the worst case being
$S_{\max} \sim N$), the runtime degenerates to quadratic.  But for adequately
small $S_{\max}$ the runtime is instead dependent on $c_r$ and the parameter $C$
of the $\alpha$-expansion of $S_{\max}$.  This situation leads to a
simplification.

\begin{cor}
For sufficiently small $|S_{\max}|$ and sufficiently small $C$, the runtime of
range search under the conditions of Theorem \ref{thm:rs} simplifies to

\begin{equation}
O(c_r^{8 + \beta} (N + i_t(\mathscr{N}_q) + \theta)).
\end{equation}
\label{cor:rs}
\end{cor}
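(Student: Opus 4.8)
The plan is to read off the result directly from Theorem~\ref{thm:rs} and simplify the $\max(\cdot,\cdot)$ term. Recall that Theorem~\ref{thm:rs} bounds the running time by
\begin{equation}
O\left(c_r^{4} \max\left(c_r^{4 + \beta},\, |S_{\max}| + C\right) (N + i_t(\mathscr{N}_q) + \theta)\right),
\end{equation}
with $\beta = \lceil \log_2(1 + \alpha^{-1})\rceil$. The informal hypothesis ``sufficiently small $|S_{\max}|$ and sufficiently small $C$'' should be made precise as the condition $|S_{\max}| + C = O\!\left(c_r^{4+\beta}\right)$; this is exactly the regime in which the first argument of the $\max$ dominates. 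In particular it holds whenever $|S_{\max}|$ and $C$ are bounded by constants independent of $N$, which is the practically interesting case of small, localized range sets.

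First I would observe that under this condition $\max\!\left(c_r^{4+\beta},\, |S_{\max}| + C\right) = O\!\left(c_r^{4+\beta}\right)$, since each argument is $O(c_r^{4+\beta})$. Substituting this into the bound of Theorem~\ref{thm:rs} and absorbing the leading factor via $c_r^4 \cdot c_r^{4+\beta} = c_r^{8+\beta}$ yields
\begin{equation}
O\left(c_r^{8+\beta} (N + i_t(\mathscr{N}_q) + \theta)\right),
\end{equation}
which is precisely the claimed expression. Nothing further is required, so the corollary follows immediately.

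I do not expect any real obstacle here, as this is a direct specialization of the main theorem. The only point deserving a sentence of care is making the qualitative hypothesis quantitative: one must state what ``sufficiently small'' means (namely $|S_{\max}| + C = O(c_r^{4+\beta})$) so that the simplification of the $\max$ is justified; note also that since $\alpha > 0$ forces $\beta \ge 1$, the threshold $c_r^{4+\beta} \ge c_r^5$ is never degenerate, so the condition is genuinely satisfiable.
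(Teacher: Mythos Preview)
Your proposal is correct and matches the paper's intent: the corollary is stated there as an immediate specialization of Theorem~\ref{thm:rs}, with no separate proof given, precisely because once $|S_{\max}| + C$ is small enough for the first argument of the $\max$ to dominate, the simplification to $c_r^{8+\beta}$ is automatic. Your explicit identification of ``sufficiently small'' as $|S_{\max}| + C = O(c_r^{4+\beta})$ is exactly the right way to make the hypothesis precise.
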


In this setting we can more easily consider the relation of the running time to
$\alpha$.  Consider $\alpha = (1 / 3)$; this yields a running time of $O(c^8 (N
+ \theta))$.  $\alpha = (1 / 7)$ yields $O(c^9 (N + i_t(\mathscr{N}_q +
\theta))$, $\alpha = (1 / 15)$ yields $O(c^{10} (N + i_t(\mathscr{N}_q) +
\theta))$, and so forth.  As $\alpha$ gets smaller, the exponent on $c$ gets
larger, and diverges as $\alpha \to 0$.

For reasonable runtime it is necessary that the $\alpha$-expansion of $S_{\max}$
be bounded.  This is because the dual-tree recursion must retain reference nodes
which may contain descendants in the range set $S[p_q]$ for some query $p_q$.
The parameter $C$ of the $\alpha$-expansion allows us to bound the number of
reference nodes of this type, and if $\alpha$ increases but $C$ remains small
enough that Corollary \ref{cor:rs} applies, then we are able to obtain tighter
running bounds.

\section{Conclusion}
\label{sec:conclusion}

We have presented a unified framework for bounding the runtimes of dual-tree
algorithms that use cover trees and the standard cover tree pruning dual-tree
traversal (Algorithm \ref{alg:cover-tree-dual}).  In order to produce an
understandable bound, we have introduced the notion of cover tree imbalance; one
possible interesting direction of future work is to empirically and
theoretically minimize this quantity by way of modified tree construction
algorithms; this is likely to provide both tighter runtime bounds and also
accelerated empirical results.

Our main result, Theorem \ref{thm:ct-runtime}, allows plug-and-play runtime
bounding of these algorithms.  We have shown that Theorem \ref{thm:ct-runtime}
is useful for bounding the runtime of nearest neighbor search (Theorem
\ref{thm:nns}), approximate kernel density estimation (Theorem
\ref{thm:kde-bound}), exact range count, and exact range search (Theorem
\ref{thm:rs}).  With our contribution, bounding a cover tree dual-tree algorithm
is straightforward and only involves bounding the maximum size of the reference
set, $| R^* |$.

\bibliography{bounds}

% Appendix doesn't have anything useful or non-straightforward.
%\appendix
%\input{appendix-ec}

\end{document}